\documentclass[a4paper,USenglish,cleveref, autoref]{lipics-v2019}

\bibliographystyle{plainurl}


\usepackage{nicefrac}

\newenvironment{proofidea}{\noindent {\bf Proof idea.}}{\qed\medskip}

\newcommand{\id}{\mbox{\rm id}}
\newcommand{\acc}{\mbox{\rm accept}}
\newcommand{\rej}{\mbox{\rm reject}}

\newcommand{\bbF}{\mathop{\mathbb{F}}}
\newcommand{\fq}{{\textstyle \bbF_q}}

\newcommand{\set}[1]{\left\{ #1 \right\}}

\newcommand{\seq}{\textsc{SumZero}}

\newcommand{\prover}{\textsf{p}}
\newcommand{\verifier}{\textsf{v}}
\newcommand{\cI}{\mathcal{I}}

\newcommand{\cL}{{\mathcal{L}}}
\renewcommand{\P}{{\mathcal{P}}}
\newcommand{\BAD}{{\cal{B}}}
\newcommand{\LD}{\mbox{\sf LD}}
\newcommand{\LH}{\mbox{\sf LH}}

\newcommand{\BPLD}{\mbox{\sf BPLD}}
\newcommand{\BPP}{\mbox{\sf BPP}}
\newcommand{\PLS}{\mbox{\sf PLS}}
\newcommand{\RPLS}{\mbox{\sf RPLS}}
\newcommand{\LCP}{\mbox{\sf LCP}}
\newcommand{\NLD}{\mbox{\sf NLD}}

\newcommand{\MA}{\mbox{\sf dMA}}

\newcommand{\AM}{\mbox{\sf dAM}}
\newcommand{\AMH}{\mbox{\sf dAMH}}
\newcommand{\MAM}{\mbox{\sf dMAM}}
\newcommand{\AMA}{\mbox{\sf dAMA}}
\newcommand{\AMAM}{\mbox{\sf dAMAM}}


\newcommand{\niso}{\overline{\mathsf{Iso}}}
\newcommand{\sym}{\mathsf{Sym}}
\newcommand{\nsym}{\overline{\mathsf{Sym}}}
\newcommand{\ncol}{\overline{\mathsf{3Col}}}
\newcommand{\equ}{\mathsf{Eq}}
\newcommand{\nequ}{\overline{\mathsf{Eq}}}
\newcommand{\optim}{\mathsf{OptVal}}

\newcommand{\adm}{\mathsf{Adm}}
\newcommand{\tfree}{\Delta_\mathsf{free}}

\usepackage{hyperref}
\usepackage{tikz}
\usetikzlibrary{shapes}
\usepackage{eurosym}

\title{Trade-offs in Distributed Interactive Proofs}

\author{Pierluigi Crescenzi}{IRIF---CNRS and Universit\'e de Paris, France}%
{piluc@irif.fr}{}{}

\author{Pierre Fraigniaud}{IRIF---CNRS and Universit\'e de Paris, France}%
{pierref@irif.fr}{}{Additional support from the INRIA project GANG, and from the ANR project DESCARTES.}

\author{Ami Paz}{IRIF---CNRS and Universit\'e de Paris, France}%
{amipaz@irif.fr}{}{Supported by the Fondation des Sciences Math\'ematiques de Paris.}

\authorrunning{P. Crescenzi, P. Fraigniaud and A. Paz}
\Copyright{Pierluigi Crescenzi, Pierre Fraigniaud and Ami Paz}


\ccsdesc[500]{Theory of computation~Distributed computing models}
\ccsdesc[500]{Theory of computation~Interactive proof systems}
\ccsdesc[500]{Theory of computation~Distributed algorithms}

\keywords{Distributed interactive proofs, Distributed verification}

\relatedversion{}

\acknowledgements{The authors are thankful to Gianlorenzo D'Angelo for fruitful discussions on the topic of this paper, and to Amir Yehudayoff for discussion on his work~\cite{RY15}.}

\nolinenumbers 

\hideLIPIcs  

\begin{document}

\maketitle

\begin{abstract}
The study of interactive proofs in the context of distributed network computing is a novel topic, recently introduced by Kol, Oshman, and Saxena [PODC 2018]. In the spirit of sequential interactive proofs theory, we study the power of distributed interactive proofs. This is achieved via a series of results establishing trade-offs between various parameters impacting the power of interactive proofs, including the number of interactions, the certificate size, the communication complexity, and the form of randomness used. Our results also connect distributed interactive proofs with the established field of distributed verification. In general, our results contribute to providing structure to the landscape of distributed interactive proofs.
\end{abstract}

\section{Introduction}

This paper is concerned with distributed network computing, in which $n$ processing nodes occupy the $n$ vertices of a connected simple graph $G$, and communicate through the edges of $G$. In this context, \emph{distributed decision}~\cite{NS95} refers to the task in which the nodes have to collectively decide whether the network $G$ satisfies some given graph property, which may refer also to input labels given to the nodes (basic examples of such tasks are whether the network is acyclic or whether the network is properly colored). If the property is satisfied then all nodes must accept, otherwise at least one node must reject. Distributed decision finds immediate applications to distributed fault-tolerant computing, in which the nodes must check whether the current network configuration is in a legal state with respect to some Boolean predicate~\cite{KKP10}. (If this is not the case, the rejecting node(s) may raise an alarm or launch a recovery procedure.) 

While some properties (e.g., whether a given coloring is proper) are locally decidable (LD) by exchanging information between neighbors only, other properties are not (e.g., whether the network is acyclic, or whether a given set of pointers forms a spanning tree of the network). As a remedy, the notion of \emph{proof-labeling scheme} (PLS) was introduced~\cite{KKP10}, and variants were considered, including  \emph{non-deterministic local decision} (NLD)~\cite{FKP13}, and \emph{locally checkable proofs} (LCP)~\cite{GS16}. All these settings assume the existence of a \emph{prover} assigning \emph{certificates} to the nodes, and a distributed \emph{verifier} in charge of verifying that these certificates form a distributed proof that the network satisfies some given property. For instance, acyclicness can be certified by a prover picking one arbitrary node $u$, and assigning to each node $v$ a certificate $c(v)$ equal its distance to $u$. The distributed verifier running at every node $v$ checks that $v$ has one neighbor $w$ satisfying $c(w)=c(v)-1$, and all its other neighbors $w'$ satisfying $c(w')=c(v)+1$. If the network contains a cycle, then these equalities will be violated in at least one node. 

Note that the prover is not necessarily an abstract entity, as an algorithm constructing some distributed data structure (e.g., a spanning tree) may construct in parallel a proof that this data structure is correct. Interestingly, all (Turing decidable) graph properties can be certified by PLS and LCP with $O(n^2)$-bits certificates~\cite{KKP10}, and this is tight~\cite{GS16} --- for instance, symmetry\footnote{$G$ is symmetric if $G$ has a non trivial automorphism, i.e., a one-to-one mapping from the set of nodes to itself preserving edges, and distinct from the identity map. } ($\sym$) was shown to require $\Omega(n^2)$-bit certificates. However, not only such universal certification requires high space complexity at the nodes for storing large certificates, it also requires high communication complexity between neighbors for verifying the correctness of these certificates. Hence, the concern is minimizing the size of the certificates for specific graph properties, e.g., minimum-weight spanning trees (MST)~\cite{KK07}. 

Recently, the notion of \emph{randomized} proof-labeling schemes (RPLS) was introduced~\cite{BFPS15}. RPLS assumes that the distributed verifier is randomized, and the global verdict provided by the nodes about the correctness of the network configuration should hold with probability at least~$\nicefrac23$. Such randomized distributed verification schemes were proven to be very efficient in terms of communication complexity (with $O(\log n)$-bit messages exchanged between neighbors), but this often holds at the cost of actually \emph{increasing} the size of the certificates provided by the prover. 

Another recent direction for reducing the certificate size introduces a \emph{local hierarchy} of complexity classes defined by \emph{alternating quantifiers} (similarly to the polynomial hierarchy~\cite{S77}) for local decision~\cite{FFH16}. Interestingly, many properties requiring $\Omega(n^2)$-bit certificates with a locally checkable proof stand at the bottom levels of this hierarchy, with $O(\log n)$-bit certificates. This is for instance the case of non 3-colorability ($\ncol$), which stands at the second level of the hierarchy, and $\sym$, which stands at the third level of the hierarchy. More generally, all monadic second order graph properties belong to this hierarchy with $O(\log n)$-bit certificates. However, it is not clear how to implement the protocols resulting from this hierarchy. 

Even more recently, a very original and innovative approach was adopted~\cite{KOS18,NPY18}, bearing similarities with the local hierarchy but perhaps offering more algorithmic flavor. 
This approach considers \emph{distributed interactive proofs}. Such proofs consist of a constant number of interactions between a centralized prover $\mathsf{M}$ (a.k.a.~Merlin) and a randomized distributed verifier $\mathsf{A}$ (a.k.a.~Arthur). For instance, a $\AM$ protocol is  a protocol with two interactions: 
Arthur queries Merlin by sending a random string, and Merlin replies to this query by sending a certificate. Similarly, a $\MAM$ protocol involves three interactions: Merlin provides a certificate to Arthur, then Arthur queries Merlin by sending a random string, and finally Merlin replies to Arthur's query by sending another certificate. This series of interactions is followed by a phase of distributed verification performed between every node and its neighbors, which may be either deterministic or randomized.

Although the interactive model seems weaker than the alternation of quantifiers in the local hierarchy, many properties requiring $\Omega(n^2)$-bits certificates with a locally checkable proof admit an Arthur-Merlin protocol with small certificates, and very few interactions. For instance, this is the case of $\sym$ that admits a $\MAM$ protocol with $O(\log n)$-bit certificates, and a $\AM$ protocol with $O(n\log n)$-bit certificates~\cite{KOS18} (on the other hand, any $\AM$ protocol for $\sym$ requires $\Omega(\log\log n)$-bit certificates~\cite{KOS18}). It is also known that non symmetry ($\nsym$) can be decided by a $\AMAM$ protocol with $O(\log n)$-bit certificates~\cite{NPY18}. These results raise several interesting questions, such~as: 

\begin{enumerate}
    \item Are there ways to establish trade-offs between space complexity (i.e., the size of the certificates) and communication complexity (i.e., the size of the messages exchanged between nodes)? The $\MA$ protocols~\cite{KOS18} as well as the RPLS protocols~\cite{BFPS15} enable to gain a lot in terms of message complexity, but at the cost of still high space complexity. Would it be possible to compromise between these two complexities? In particular, would it be possible to reduce the certificate size at the cost of increasing the communication complexity? 
    
    \item The theory of distributed decision has somehow restricted itself to distributed randomness, in the sense that each node has only access to a private source of random coins. These coins are public to the prover, but remain private to the other nodes. Shared randomness is known to be stronger than private randomness for communication complexity, as witnessed by, e.g., deciding equality~\cite{KN97}. How much shared randomness could help in the context of distributed decision?  

    \item Last, but not least, are there general reduction theorems between Arthur-Merlin classes for trading the number of interactions with the certificate size? Indeed, it is known that, in the centralized setting, $\mathsf{AM}[k]=\mathsf{AM}[2]$ for any $k\geq 2$, but it is not known whether such collapse holds in the distributed setting~\cite{KOS18}. Also, the Sipser–Lautemann theorem tells us that, in the centralized setting, $\mathsf{MA}\subseteq \Sigma_2\,\cap \, \Pi_2$ and $\mathsf{MA}\subseteq \mathsf{AM}\subseteq \Pi_2$. It is not known whether the distributed Arthur-Merlin classes stand so low in the local alternating hierarchy too. 
\end{enumerate}

\paragraph*{Our Results.}

In this paper, we study the power of distributed interactive proofs. This is achieved via a series of results establishing trade-offs between various parameters impacting the power of interactive proofs, including the number of interactions, the certificate size, the communication complexity, and the form of randomness used. Our results also connect distributed interactive proofs with the established field of distributed verification. We address the above three questions as follows. 
For the first question, 
we show how to apply techniques developed in the framework of multi-party communication complexity to get trade-offs between space and communication for the classical triangle detection problem. 
For the second question, 
we show that shared randomness helps significantly, enabling to exponentially reduce the communication complexity while preserving the space complexity for important problems such as spanning tree, and a vast class of optimization problems, including, for example, maximum independent set and minimum dominating set. 
For the third question, 
we give a general technique for reducing the number of interactions at the cost of increasing the certificate and message size. 

More specifically, for the first question, we explore the trade-off of space vs.\ communication, and establish that, for every $\alpha$, there exists a Merlin-Arthur protocol for triangle-freeness, using $O(\log n)$ bits of shared randomness, with $\widetilde{O}(n/\alpha)$-bit certificates and $\widetilde{O}(\alpha)$-bit messages between nodes (see Theorem~\ref{theo:triaglefreeness}). To our knowledge, this is the first example of a decision task for which one can trade communication for space, and vice versa. In addition, the proof reveals an interesting connection between $\MA$ and communication complexity with a referee. Note that, for $\alpha=\sqrt{n}$, we obtain a distributed Merlin-Arthur protocol for triangle-freeness with message and space complexities $\widetilde{O}(\sqrt{n})$ bits. In contrast, any proof-labeling scheme for triangle-freeness must have certificate size at least $n/e^{O(\sqrt{\log n})}$ bits (see Proposition~\ref{prop: lb for triangle-freeness}). A similar tradeoff can be obtained when using distributed randomness, though with higher space complexity.


Regarding the second question, we explore the significance of having access to shared randomness. We show that, for any minimization problem $\pi$ in graphs whose admissibility can be decided locally, there exists a Merlin-Arthur protocol for certifying the existence of a solution whose cost is at most~$k$, using $O(\log n)$ bits of shared randomness, with $O(\log n)$-bit certificates and $O(\log\log n)$-bit messages between nodes (see Theorem~\ref{theo:optimization}). The same result holds for maximization problems whose admissibility can be decided locally. Note that this class of problems includes, for example, maximum independent set, minimum dominating set, and minimum vertex cover (potentially weighted). This exponentially improves the communication complexity of locally checkable proofs for such problems. The same communication complexity could be obtained using randomized proof-labeling schemes, but at the cost of increasing the certificate size to up to $O(n\log n)$ bits. 
%
%
As another interesting result in the context of exploring the significance of having access to shared randomness, we show that even shared randomness remains limited both in terms of the certificate size and of the amount of communication. We show that every Arthur-Merlin protocol for $\sym$, and every Arthur-Merlin protocol for $\nsym$ must have both certificate size and message size $\Omega(\log\log n)$ bits, even with shared randomness (see Theorem~\ref{thm:lowerboundsymmetry}).
%
%
Interestingly, for the class of graphs used in the proof of this latter result, there is a Merlin-Arthur protocol with certificates and messages of constant length. This shows that the inclusion $\mathsf{MA}\subseteq \mathsf{AM}$ which holds in the centralized setting does not hold in the distributed setting. 

Finally, we consider general reductions within the Arthur-Merlin hierarchy, and compare the power of this hierarchy to the power of proof-labeling schemes with certificates of linear size. We show that, for every  $\sigma$ and $\gamma$, any graph property verifiable with an Arthur-Merlin protocol with 3 or 4~interactions ($\MAM$ or $\AMAM$) using $\sigma$-bit certificates and  $\gamma$-bit messages can also be verified by an Arthur-Merlin ($\AM$) protocol using $O(n\sigma^2)$-bit certificates and  $O(n\gamma\sigma)$-bit messages (see Theorem~\ref{thm:reduction} and Corollary~\ref{cor:reduction}). Although the linear blowup in terms of both certificate size and message complexity may seem huge at a first glance, it fits (up to logarithmic factors) with the different results obtained previously~\cite{KOS18,NPY18} regarding $\sym$ and graph non-isomorphism ($\niso$). Finally, we compare the power of Arthur-Merlin protocols with an arbitrarily large number of interactions with the power of proof-labeling schemes. We show that there exists a graph property admitting a proof-labeling scheme with certificates and messages on $O(n)$ bits, that cannot be solved by an Arthur-Merlin protocol with $o(n)$-bit certificates, for any fixed number $k\geq 0$ of interactions between Arthur and Merlin, even using shared randomness, and even with messages of unbounded size (see Theorem~\ref{thm:insideSigma1}). This latter result demonstrates that, in general, one cannot trade the number of interactions between Merlin and Arthur for reducing the certificate size, at least for certificates of linear size. 
%
%

Most of our results are stated by assuming that nodes have access to shared randomness. However, as all our protocols are local, all our 1-round protocols can be simulated by 2-round protocols using distributed randomness. In general, our results contribute to providing structure to the landscape of distributed interactive proofs. 

\paragraph*{Related Work.}

Local decision (\LD) and the central notion of locally-checkable labellings were introduced and thoroughly studied in the 90s~\cite{NS95}. Local verification was introduced fifteen years later~\cite{KKP10}, through the original notion of proof-labeling schemes (\PLS).  Proof-labeling schemes find important applications to self-stabilization, but are subject to some restrictions (only certificates are exchanged between neighbors). These restrictions were lifted  by considering the general notion of locally checkable proofs (\LCP)~\cite{GS16}. By definition, we have $\LCP=\Sigma_1\LD$ (the same way $\mathsf{NP}=\Sigma_1\mathsf{P}$). A third notion of distributed verification was introduced, by considering the class \NLD~\cite{FKP13}. \NLD\/ differs from $\Sigma_1\LD$ in the fact that, in \NLD, the certificates cannot depend on the identities given to the nodes. 
 
Randomized versions of local decision and local verification have been considered in the literature \cite{BFPS15,FGKPP14,FKP13,KOS18}. A Merlin-Arthur (\MA) protocol is actually a randomized version of locally checkable proof ($\Sigma_1\LD$) that was previously studied~\cite{BFPS15}: Merlin provides each node with a certificate, and Arthur performs a randomized verification algorithm at each node. The benefit  of using $\MA$ over $\Sigma_1\LD$ can be exponential in terms of communication complexity (i.e., of the size of the messages exchanged between neighbors), at the cost of a linear increase in space complexity (i.e., of the size of the certificates provided by Merlin)~\cite{BFPS15}. 

Very closely related to the line of work about interactive distributed proofs is the local hierarchy 
$
\LH=\bigcup_{k\geq 0}\Big(\Sigma_k\LD \cup \Pi_k\LD\Big)
$
with certificates and messages of logarithmic size~\cite{FFH16}, extending the known $\mathsf{logLCP}$ class~\cite{GS16}. In particular, it is proved that $\sym\in\Sigma_3\LD$~\cite{FFH16}. Also, it is easy to show that $\ncol\in\Pi_2\LD$. In contrast, placing $\sym$ or $\ncol$ in $\Sigma_1\LD$ requires $\Omega(n^2)$-bit certificates~\cite{GS16}. The same hierarchy was later considered,  but under the constraint that the certificates must not depend on the identifier assignment to the nodes. With $O(n^2)$-bit certificates, this hierarchy collapses at the second level~$\Pi_2\LD$~\cite{BDFO18}. (This is in contrast with the hierarchy in which certificates can depend on the node identifiers, which collapses at the first level~$\Sigma_1\LD$ with $O(n^2)$-bit certificates.) Nevertheless, apart from the bottom levels, the hierarchies based on alternating quantifiers with $O(\log n)$-bit certificates are essentially the same~\cite{BDFO18}. See the recent survey~\cite{FF16} for more 
results on distributed decision.

This work is inspired by very recent achievements in the field of distributed interactive protocols~\cite{KOS18,NPY18}. In addition to the aforementioned results regarding $\sym$ and $\nsym$, two  versions of $\niso$ (Given two sub-graphs $G_1$ and $G_2$ of $G$, $G_1\not\sim G_2$, that is, are $G_1$ and $G_2$ non-isomorphic?) were considered. For the easiest version, in which the input of each node $v$ is formed by the two sets of neighbors of $v$ in the two graphs $G_1$ and $G_2$, a $\AMAM$ protocol with $O(\log n)$-bit certificates for deciding $G_1\not\sim G_2$ was designed~\cite{NPY18}. For the more complicated version, in which $G_1=G$ (that is, $G_1$ is the communication network) and the input of each node $v$ is the set of neighbors of $v$ in $G_2$,  a $\AMAM$ protocol with $O(n \log n)$-bit certificates for deciding $G_1\not\sim G_2$ was proposed~\cite{KOS18}, and  an Arthur-Merlin protocol with a constant number of interactions and $O(\log n)$-bit certificates, for deciding $G_1\not\sim G_2$, was successively designed~\cite{NPY18}.
Interestingly, this latter result is obtained via a general connection between efficient centralized computation (under various models) and the ability to design Arthur-Merlin protocols with a constant number of interactions between the prover and the verifier, using logarithmic-size certificates. 

We use a variety of techniques and results from the theory of \emph{communication complexity}~\cite{KN97}. Specifically, we use an Arthur-Merlin style protocol for two-party disjointness~\cite{AW09}, in a recent variant~\cite{ARR17} that allows a trade-off between communication complexity and certificate size.
We also use recent lower bounds for the equality and non-equality problems in the same setting~\cite{GPW16}.
Finally, we use multi-party communication protocol with a referee for the $\seq$ problem with bounded inputs~\cite{N93}, and with unbounded inputs~\cite{KLM18}.

\section{Model and Definitions}

A \emph{network configuration} is a triple $(G,\id,x)$ where $G=(V,E)$ is a connected simple graph, $\id:V\to \{1,\dots,n^c\}$ for some constant $c\geq 1$ is the \emph{identity} one-to-one assignment to the nodes, and $x:V\to\{0,1\}^*$ is the \emph{input label} assignment (i.e., the state of the node). A \emph{distributed language} is a collection $\cL$ of network configurations. Note that it may be the case that, for some language $\cL$, $(G,\id,x)\in\cL$ while $(G,\id',x)\notin\cL$ for two different identity assignments $\id$ and $\id'$. This typically occurs for languages where the label of a node refers to the identities of its neighbors, e.g., for encoding spanning trees. (Throughout the paper, we assume that all considered distributed languages are Turing-decidable). \emph{Distributed decision} for $\cL$ is the following task: given any network configuration $(G,\id,x)$, the nodes of $G$ must collectively decide whether $(G,\id,x)\in \cL$. If this is the case, then \emph{all} nodes must accept, otherwise \emph{at least one node} must reject (with certain probabilities, depending on the model). 


We consider \emph{interactive protocols} for distributed decision~\cite{KOS18}. A distributed interactive protocol $\P$ consists of a constant series of interactions between a \emph{prover} called \emph{Merlin}, and a \emph{verifier} called \emph{Arthur} (see Fig.~\ref{fig:dAM} for a visual representation of such a protocol). The prover Merlin is centralized, and has unlimited computing power. It is aware of the whole network configuration $(G,\id,x)$ under consideration, but it cannot be trusted. The verifier Arthur is distributed, and has bounded knowledge, that is, at each node $v$, Arthur is initially aware solely of $(\id(v),x(v))$, i.e., of its identity and its input label. 

\begin{figure}
	\begin{tikzpicture}[scale=0.475,every node/.style={scale=0.475,minimum width=1.1cm}]
	\draw (-7,0) rectangle  +(14,4.5);
	\path (-7,0) --  (7,0) node[midway,above] {\Large{shared randomness}};
	\node[cloud, draw,cloud puffs=20,cloud puff arc=60, aspect=6, inner ysep=2em] at (0,3) {};
	\node[circle,draw,double, double distance=1pt] (euro) at (-4,1) {\Large{\euro}};
	\node[rectangle,draw,double, double distance=1pt] (merlin) at (0,1) {\Large{Merlin}};
	\draw[->] (euro) -- node[below] {$r_j$} (merlin);
	\foreach \w/\x/\y/\z/\p/\d/\l in {1/-4/3/1/{0.3}/{left}/{-0.3cm},2/-1.5/3/2/{0.3}/{left}/{-0.2cm},3/1.5/3/{n-1}/{0.3}/{left}/{-0cm},4/4/3/n/{0.1}/{right}/{0.2cm}} {
		\node[circle,draw] (v\w) at (\x,\y) {\Large{$v_{\z}$}};
		\draw[->] (euro) -- node[\d=\l,pos=\p] {\Large{$r_j$}} (v\w);
	}
	\node at (0,3) {$G$};
	\node at (4,1) {$|r_j|=\rho(n)$};
	\begin{scope}[yshift=-7cm]
		\draw (-7,0) rectangle  +(14,6.5);
		\path (-7,0) --  (7,0) node[midway,above] {\Large{distributed randomness}};
		\node[cloud, draw,cloud puffs=20,cloud puff arc=60, aspect=6, inner ysep=2em] at (0,3) {};
		\node[rectangle,draw,double, double distance=1pt] (merlin) at (0,1) {\Large{Merlin}};
		\foreach \w/\x/\y/\z/\p/\d/\l/\q/\e/\m in {1/-4/3/1/{0.3}/{left}/{-0.1cm}/{0.3}/{left}/{-0cm},2/-1.5/3/2/{0.3}/{left}/{-0.1cm}/{0.3}/{left}/{-0.3cm},3/1.5/3/{n-1}/{0.3}/{right}/{0cm}/{0.3}/{right}/{0cm},4/4/3/n/{0.3}/{right}/{-0cm}/{0.3}/{right}/{-0cm}} {
			\node[circle,draw] (v\w) at (\x,\y) {\Large{$v_{\z}$}};
			\node[circle,draw,double, double distance=1pt] (euro\w) at (\x,\y+2.5) {\Large{\euro}};
			\draw[->] (euro\w) -- node[\d=\l,pos=\p] {\Large{$r_j(v_{\z})$}} (v\w);
			\draw[->] (v\w) -- (merlin);
		}
		\node at (0,3) {\Large{$G$}};
		\node at (4,1) {\Large{$|r_j(v_i)|=\rho(n)$}};
	\end{scope}
	\draw (-7.5,-8) rectangle  +(15,13);
	\path (-7.5,-8) --  (7.5,-8) node[midway,above] {\Large{(a) Arthur phase with random complexity $\rho$ at interaction $j$}};
	\begin{scope}[xshift=15cm]
		\draw (-7,0) rectangle  +(14,4.5);
		\path (-7,0) --  (7,0) node[midway,above] {\Large{(b) Merlin phase with space complexity $\sigma$ at interaction $j$}};
		\node[cloud, draw,cloud puffs=20,cloud puff arc=60, aspect=6, inner ysep=2em] at (0,3) {};
		\node[rectangle,draw,double, double distance=1pt] (merlin) at (0,1) {\Large{Merlin}};
		\foreach \w/\x/\y/\z/\p/\d/\l in {1/-4/3/1/{0.7}/{left}/{0.2cm},2/-1.5/3/2/{0.7}/{left}/{-0cm},3/1.5/3/{n-1}/{0.7}/{left}/{-0cm},4/4/3/n/{0.7}/{right}/{0.1cm}} {
			\node[circle,draw] (v\w) at (\x,\y) {\Large{$v_{\z}$}};
			\draw[->] (merlin) -- node[\d=\l,pos=\p] {\Large{$c_j(v_{\z})$}} (v\w);
		}
		\node at (0,3) {\Large{$G$}};
		\node at (4,1) {\Large{$|c_j(v_i)|=\sigma(n)$}};
	\end{scope}
	\begin{scope}[xshift=15cm,yshift=-7cm]
		\draw (-7,0) rectangle  +(14,4.5);
		\path (-7,0) --  (7,0) node[midway,above] {\Large{(c) Verification phase with communication complexity $\gamma$}};
		\node[cloud, draw,cloud puffs=20,cloud puff arc=60, aspect=6, inner ysep=2em] at (0,3) {};
		\foreach \w/\x/\y/\z/\p in {1/-4/3/1/{left},2/-1.5/3/2/{left},3/1.5/3/{n-1}/{right},4/4/3/n/{right}} {
			\node[circle,draw] (v\w) at (\x,\y) {\Large{$v_{\z}$}};
		}
		\draw[->] (v2) -- node[below] {\Large{$m_{2\rightarrow 1}$}} (v1);
		\draw[->] (v2) to[bend left=60] node[below=0.1cm] {\Large{$m_{2\rightarrow {n-1}}$}} (v3);
		\node at (0,3) {\Large{$G$}};
		\node at (4,1) {\Large{$|m_{i\rightarrow j}|=\gamma(n)$}};
	\end{scope}
\end{tikzpicture}
\caption{The three different phases of a distributed Arthur-Merlin protocol (the Arthur phase can make use of either shared or distributed randomness)}
\label{fig:dAM}
\end{figure}
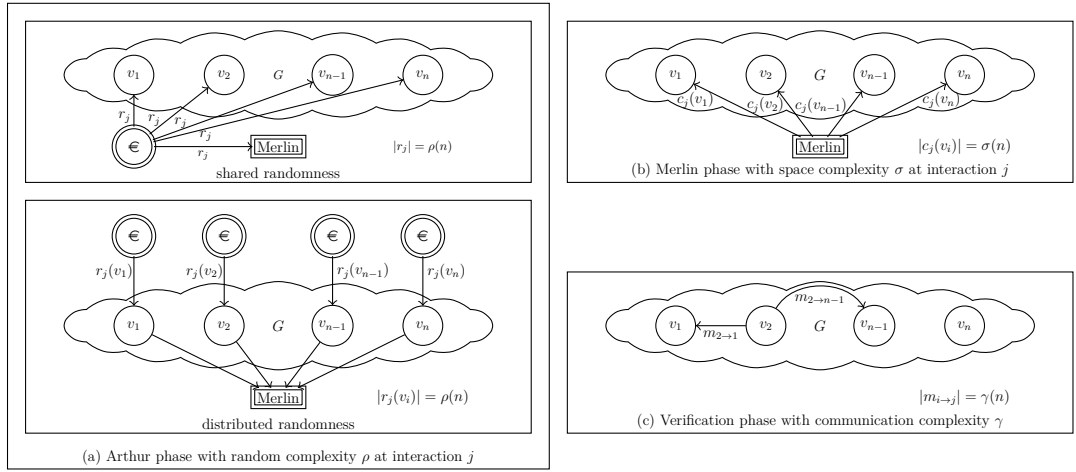

In a distributed \emph{Arthur-Merlin} interactive protocol performed on $\cI=(G,\id,x)$, whenever Arthur is the one that starts interacting, it picks a random string $r_1(v)$  at each node $v$ of $G$ (this random string might be private to each node, or the nodes may have access to shared randomness). Given the collection $r_1$ of random strings selected by the nodes, Merlin provides every node $v$ with a \emph{certificate} $c_1(v)=\prover(v,\cI,r_1)$, where $\prover:\{0,1\}^*\to\{0,1\}^*$. At this point, Arthur picks another random string $r_2(v)$  at each node~$v$. Then Merlin replies to each node~$v$ by sending a second certificate $c_2(v)=\prover(v,\cI,r_1,r_2)$, and so on. Whenever Merlin is the one that starts interacting, the process starts with Merlin constructing a binary string $c_0(v)=\prover(v,\cI)$ that it sends to every node $v$. These interactions proceed for a constant number $k\geq 0$ of times, and Merlin interacts last, by sending $c_{\lfloor\frac{k}{2}\rfloor}(v)$ to every node~$v$. A sequence of interactions can then be summarized by a \emph{transcript} $\pi(\cI,\prover,r)=\Big( \pi_v(\cI,\prover,r) \Big)_{v\in V}$, where $r=\left(r_1,\dots,r_{\lfloor\frac{k}{2}\rfloor}\right)$ with $r_i=(r_i(v))_{v\in V}$ for $i=1,\dots,\lfloor\frac{k}{2}\rfloor$, and 
\[
\pi_v(\cI,\prover,r)=\left(c_0(v),r_1(v),c_1(v),\dots,r_{\lfloor\frac{k}{2}\rfloor}(v),c_{\lfloor\frac{k}{2}\rfloor}(v)\right)
\]
with $c_0(v)=\varnothing$ if Arthur starts the interactions. In other words, an Arthur-Merlin protocol with $k$ interactions results in a transcript with $c_0(v)=\varnothing$ if $k$ is even, and with $c_0(v)$ equal to the first certificate provided by Merlin otherwise.  The Arthur-Merlin protocol completes by performing a \emph{deterministic} distributed verification algorithm $\verifier$ executed at each node. Specifically, Algorithm~$\verifier$ proceeds as follows at every node $v$: 
\begin{enumerate} 
\item A message $M_{v,u}$ destined for every neighboring node $u$ of $v$ is forged, and sent to $u$. This message may depend on the identity $\id(v)$, the input $x(v)$, all random strings generated by Arthur at $v$, and all certificates received by $v$ from Merlin. 
\item Based on all the knowledge accumulated by $v$ (i.e., its identity, its input label, the generated random strings, the certificates received from Merlin, and all the messages received from its neighbors), Algorithm~$\verifier$ accepts or rejects at node~$v$.
\end{enumerate}

A distributed Arthur-Merlin protocol $\P$ thus consists of two consecutive stages: (1)~interactions between the nodes and the prover~$\prover$ (Arthur-Merlin phases), and (2)~communication among neighboring nodes (algorithm~$\verifier$). Note that, for the sake of simplifying the presentation, and unifying the comparison with previous work, we restrict ourselves to verification algorithms $\verifier$ that perform in a single round. Performing more than one round enables to improve the complexity of verification protocols in some cases~\cite{FFHPP18}. However,  this does not conceptually change the nature of the protocol. For zero interactions (i.e., $k=0$), a distributed Arthur-Merlin protocol simply consists in performing a (deterministic) decision algorithm at each node~\cite{KKP10}. For one interaction (i.e., $k=1$), a distributed Arthur-Merlin verification protocol is a (1-round) locally-checkable proof algorithm~\cite{GS16}. 

\begin{definition}
The class $\AM[k](\sigma,\gamma)$ is the class of languages $\cL$ for which there exists a distributed Arthur-Merlin verification protocol with at most $k\geq 0$ interactions between Arthur and Merlin, where Merlin provides certificates of at most $\sigma\geq 0$ bits to the nodes, and the verification algorithm $\verifier$ exchanges messages of at most $\gamma\geq 0$ bits between nodes, such that, for every configuration $\cI=(G,\id,x)$, 
\[
\left\{\begin{array}{lcl}
(G,\id,x)\in \cL & \Rightarrow & \exists \prover : \Pr_r[\verifier(\pi(\cI,\prover,r))\; \mbox{accepts at all nodes}]\geq \nicefrac23;  \\
(G,\id,x)\notin \cL & \Rightarrow & \forall \prover : \Pr_r[\verifier(\pi(\cI,\prover,r))\;\mbox{rejects in at least one node}]\geq \nicefrac23.
\end{array}\right.
\]
\end{definition}

The definition of distributed \emph{Merlin-Arthur} interactive protocols, and of $\MA[k](\sigma,\gamma)$ is similar, apart from the fact that, as opposed to Arthur-Merlin protocols in which Merlin always interacts last, Arthur has one more ``interaction'' during which it picks a random bit-string $r'(v)$ at every node~$v$, which is used to perform a \emph{randomized} verification algorithm~$\verifier$. Therefore, for $k\geq 1$, a Merlin-Arthur protocol with $k$ interactions can be defined as an Arthur-Merlin protocol with $k-1$ interactions, but where the verification algorithm~$\verifier$ is randomized. For zero interactions, a distributed Merlin-Arthur protocol simply consists in performing a (deterministic) decision algorithm at each node~\cite{KKP10}. For one interaction, a distributed Merlin-Arthur protocol is a (1-round) randomized decision algorithm as studied previously~\cite{FGKPP14}. For two interactions, a distributed Merlin-Arthur protocol is a (1-round) randomized locally-checkable proof algorithm, also as studied previously~\cite{BFPS15}. 

In the following, we may avoid mentioning the parameters $\sigma$ and $\gamma$ when they are clear from the context, or when they are respectively identical in the two terms of an equality. For small values of $k\geq 2$, $\AM[k]$ and $\MA[k]$ are rewritten as an alternating sequence of $\mathsf{A}$s and $\mathsf{M}$s. For instance, $\AM[2]=\AM$, $\MA[2]=\MA$, $\AM[3]=\MAM$, $\MA[3]=\AMA$, and $\AM[4]=\AMAM$, and so on. For $k\leq 1$, it follows from the definition that 
$
\AM[0]=\MA[0]=\LD.
$
We also have 
$
\AM[1]=\Sigma_1\LD \; \mbox{and} \; \MA[1]=\BPLD(\nicefrac23,\nicefrac23)
$
where the class $\BPLD(p,q)$ is the distributed version of $\BPP$~\cite{G77}, with $p$ being the acceptance probability of the interactive protocol on legal instances, and $q$ being the rejection probability of the protocol on illegal instances~\cite{FGKPP14}. As a last example, $\MA$ is the class of languages that can be decided by a randomized locally checkable proof, as studied previously~\cite{BFPS15}. 

As opposed to the sequential setting in which it is known that $\mathsf{AM}[k]=\mathsf{AM}[2]$ for all $k\geq 2$, it is not known whether such ``collapse'' occurs in the distributed setting. Therefore, we define the Arthur-Merlin \emph{hierarchy} as $
\AMH(\sigma,\gamma)=\bigcup_{k\geq 0}\AM[k](\sigma,\gamma).
$
That is, $\cL\in\AMH(\sigma,\gamma)$ if and only if there exists $k\geq 0$ such that $\cL\in\AM[k](\sigma,\gamma)$. 

\paragraph*{Boosting the Success Probability.}
In classical, sequential randomized algorithm, the success probability constant~$\nicefrac23$ can be easily increased using repetitions.
On the other hand, it was shown that this boosting technique is not  applicable for randomized distributed decision algorithms in general~\cite{FGKPP14}, making the choice of constant significant when considering such settings.
The inability of boosting in the distributed setting is due to the fact that, when repeating the algorithm on a ``no'' instance for several times, different nodes may reject in different repetitions, causing each node to sees very few rejections and decide on acceptance.
Somewhat surprisingly, we can show that in the case of distributed Arthur-Merlin protocols (i.e., $\AM[k]$ classes), parallel repetition is possible, and at a relatively low blowup in communication and certificates.
This allows us to boost the success probability, as follows.

\begin{proposition}\label{prop:boostingproba}
Let $1> p'>p>\nicefrac12$. If there exists an Arthur-Merlin verification protocol $\P$ with $k\geq 2$ interactions that enables to verify a distributed language $\cL$ with $\sigma$-bit certificates, $\gamma$-bit messages, and success probability~$p$, then  there exists an Arthur-Merlin verification protocol $\P'$ with $k$ interactions that enables to verify $\cL$ with $\sigma+O(\log n)$-bit certificates, messages on $\gamma+O(\log n)$ bits, and success probability $p'$.
\end{proposition}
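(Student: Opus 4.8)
The plan is to amplify by \emph{parallel repetition with a threshold rule}, using Merlin only to aggregate a global count that it cannot inflate. Fix a constant number $t=t(p,p')$ of copies and run $t$ independent instances of $\P$ in lockstep, merging the $j$-th interaction of all copies into the $j$-th interaction of the new protocol $\P'$. Since this merging adds no new alternation, $\P'$ still uses exactly $k$ interactions and Merlin still speaks last. Each copy $i$ uses fresh Arthur randomness $r^{(i)}$ and its own certificates, so after the interaction phase every node $v$ holds, for each copy, all the data needed to run the deterministic verifier $\verifier$ of $\P$ on that copy. This replication incurs a constant-factor blowup, keeping certificate and message sizes within $O(\sigma)$ and $O(\gamma)$.

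The next step is the soundness analysis, where the key observation is that \emph{public-coin amplification is clean}. Call copy $i$ \emph{winnable} if there exist certificates making $\verifier$ accept at all nodes in copy $i$. Because the verdict in copy $i$ is a function of copy $i$'s transcript alone, Merlin may optimize each copy separately, so winnability of copy $i$ depends only on $r^{(i)}$; hence the $t$ winnability events are \emph{independent}. By soundness of $\P$, on a no-instance each copy is winnable with probability at most $1-p<1/2$, while on a yes-instance the good prover makes each copy fully accepting with probability at least $p>1/2$. A Chernoff bound then shows that, for a suitable constant $t$, the number of fully-accepting copies exceeds the threshold $t/2$ with probability at least $p'$ on yes-instances, and stays at most $t/2$ with probability at least $p'$ on no-instances, for \emph{every} strategy of Merlin.

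The crux is turning the global predicate ``more than $t/2$ copies fully accept'' into something locally checkable: the event ``copy $i$ fully accepts'' is a conjunction over all nodes, which no single node can evaluate, and this is precisely why the usual boosting-by-repetition fails for randomized distributed decision. I resolve this by appending to Merlin's final certificate a vector $b\in\{0,1\}^t$ claimed to mark the fully-accepting copies. The verifier $\verifier$ of $\P'$ then additionally checks that (a) $b$ equals the vector held by every neighbor, so that $b$ is globally consistent since $G$ is connected; (b) for each $i$ with $b_i=1$, node $v$ accepts in copy $i$; and (c) $\sum_i b_i>t/2$; it rejects otherwise. On a yes-instance, an honest Merlin (which knows the whole configuration and transcript) sets $b_i=1$ exactly for the fully-accepting copies, so all three checks pass whenever the count exceeds $t/2$. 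On a no-instance, to make every node accept Merlin must use a globally consistent $b$ whose $1$-entries all correspond to genuinely fully-accepting (hence winnable) copies, with more than $t/2$ of them; by the previous step this happens with probability at most $1-p'$. The vector $b$ and its consistency check cost only $O(\log n)$ additional bits, yielding the claimed $\sigma+O(\log n)$ and $\gamma+O(\log n)$ bounds.

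The main obstacle is exactly the global aggregation in the last step: one must convert the inherently global count into a locally verifiable quantity without letting an adversarial Merlin overstate it. The two ideas that make the plan work are that public-coin soundness amplifies cleanly because per-copy winnability is independent, and that a short Merlin-certified bit-vector, kept honest by local consistency together with the per-copy acceptance tests, lets each node recompute the global count locally while preventing inflation on no-instances.
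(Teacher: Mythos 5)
Your proposal follows essentially the same route as the paper: amplify by a constant number $t$ of parallel repetitions merged into the existing $k$ interactions, and use Merlin's last message to certify the global tally of accepting copies so that the majority rule becomes locally checkable. The only real difference is the aggregation gadget. The paper has Merlin certify a spanning tree (the source of its $O(\log n)$ overhead) together with counts, over the subtrees, of the repetitions in which some node rejects, and lets the root apply the majority rule; you instead give every node a constant-length bit-vector $b$ marking the allegedly fully-accepting copies, enforce global consistency of $b$ by neighbor-equality (which suffices since $G$ is connected), and have every node check that each copy marked $1$ accepts locally. Your gadget is simpler and in fact cheaper --- $O(t)=O(1)$ extra bits rather than $O(\log n)$ --- and it avoids delegating the verdict to a single root. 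Both versions share the multiplicative factor $t$ on $\sigma$ and $\gamma$ that the proposition's additive statement quietly absorbs; you are explicit about this, which is fine.

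One step deserves more care than you give it: the claim that winnability of copy $i$ ``depends only on $r^{(i)}$'' and that the $t$ winnability events are therefore independent. This is airtight for $k=2$ (i.e.\ $\AM$), where Merlin speaks only after all randomness is revealed, so the soundness of $\P$ directly bounds $\Pr[\exists\ \mbox{accepting certificates for } r^{(i)}]$ by $1-p$. For $k\geq 3$ (e.g.\ $\MAM$), Merlin commits to its early certificates \emph{before} seeing the later coins, so ``there exists an accepting transcript consistent with $r^{(i)}$'' overestimates Merlin's power and need not be bounded by $1-p$; moreover, a cheating Merlin in $\P'$ may correlate its early messages across copies, so the relevant per-copy acceptance events are not independent as stated. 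What is actually needed is the direct-product / Chernoff-type parallel repetition theorem for single-prover public-coin games, which is true but is not a one-line independence argument. To be fair, the paper's own proof does not address this point at all, so your write-up is no worse off; but if you want the argument to be self-contained for all $k\geq 2$, this is the lemma to cite or prove.
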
 

\begin{proof}
Moving from success probability $p$ to success probability $p'>p$ is achieved in a standard way, by merely repeating $\P$ a constant number of times (depending on $p$ and $p'$), and adopting the majority of the outcomes. However, this cannot be done in a straightforward manner because, for a configuration $(G,x,\id)\notin \cL$, it may be the case that the (at least one) node rejecting $(G,x,\id)$ is different at each repetition. Therefore, during the last interaction with the prover, Merlin provides every node with a local encoding of a spanning tree $T$ enabling to count the number of executions of $\P$ resulting in at least one node rejecting. It is known that certificates of $O(\log n)$ bits suffice for certifying such a tree~\cite{KKP10}. The root of the tree $T$ accepts or rejects depending on whether the majority of executions of $\P$ accepted or rejected, respectively.
\end{proof}

%

\section{Space  vs. Communication }

In this section, we study the trade-off between space and communication complexity for Merlin-Arthur interactive protocols. Specifically, we consider the classical triangle-freeness problem, and establish a trade-off between space and communication for this problem. Recall that a graph $G=(V,E)$ is triangle-free if, for every three nodes $u,v,w$ in $V$, either $\{u,v\}\notin E$,  $\{u,w\}\notin E$, or $\{v,w\}\notin E$. We denote by $\tfree$ the corresponding distributed language. There is a recent deterministic distributed algorithm for triangle-freeness running in $\widetilde{O}(\sqrt n)$ rounds in the \textsc{congest} model~\cite{CPZ19}. A general scheme for designing $\MA$ protocol has also been proposed~\cite{BFPS15}. This scheme  enables to reduce communication complexity to $O(\log n)$, at the cost of increasing the space complexity to $O(n^2)$.
When more interactions are allowed, say a constant number $k$, a recent reduction~\cite{NPY18} --- from centralized small-space algorithms to our setting ---
implies a protocol with $O(\log n)$-bit certificates and $O(\log n)$-bit messages for the triangle-freeness problem, i.e., $\tfree \in \MA[k](\log n,\; \log n)$ for some constant $k>1$.

In order to prove the trade-off between space and communication complexities for triangle-freeness, we first state the following result, which will be used at several places in the paper.

\begin{lemma}\label{lem:distance-2coloring}
For any network of maximum degree $d$, there exists a proof-labeling scheme (and thus a $\Sigma_1\LD$ protocol) with $O(\log n)$-bit certificates providing each node $v$ with the certified value $n$ of the number of nodes, and a color $c(v)\in\{1,\dots,\min\{d^2+1,n\}\}$ such that $c$ forms a certified proper distance-2 coloring of the network.
\end{lemma}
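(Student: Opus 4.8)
The plan is to build a single spanning-tree certificate that simultaneously certifies the global quantities $n$ and $d$, and to attach the coloring $c$ so that both its validity and its range can be checked purely locally. For completeness it is worth recording first that a proper distance-2 coloring with $\min\{d^2+1,n\}$ colors always exists: in the graph $G^2$, where two nodes are adjacent whenever their distance in $G$ is at most $2$, each node has at most $d+d(d-1)=d^2$ neighbors, so $G^2$ is greedily $(d^2+1)$-colorable and trivially $n$-colorable; the prover simply picks any such coloring $c$.

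First I would certify $n$ by the standard technique of~\cite{KKP10}: the prover roots a spanning tree $T$ at an arbitrary node, and equips each node $v$ with a parent pointer $p(v)$, its distance to the root, the size of its subtree, and the claimed value $n$. Each node checks that $p(v)$ is a neighbor at distance one less, that its subtree size equals $1$ plus the sum of the subtree sizes of its children (the neighbors $u$ with $p(u)=v$), that the root's subtree size equals the claimed $n$, and that it agrees with its neighbors on $n$; any incorrect $n$ forces a violation at some node. All these fields fit in $O(\log n)$ bits.

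Next I would certify the maximum degree $d$ on top of the same tree. Each node additionally stores the claimed value $d$, checks that its own degree satisfies $\deg(v)\le d$ (so the true maximum degree is at most $d$), and checks agreement with its neighbors on $d$. To certify that $d$ is actually attained, I would propagate a Boolean up $T$: each node stores a bit $b(v)$ and verifies $b(v)=[\deg(v)=d]\vee\bigvee_{u:\,p(u)=v}b(u)$, with the root requiring $b(\mathrm{root})=\mathrm{true}$. If no node had degree $d$, then every bit would be forced to false from the leaves upward, so the root check would fail; hence this certifies that some node has degree exactly $d$, and combined with the upper bound the certified $d$ equals the true maximum degree. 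Again $O(\log n)$ bits suffice. With $n$ and $d$ certified, each node computes $\min\{d^2+1,n\}$ and checks $1\le c(v)\le\min\{d^2+1,n\}$, and it checks properness locally: node $v$ verifies $c(u)\ne c(v)$ for every neighbor $u$, and $c(u)\ne c(w)$ for every pair of neighbors $u,w$. This is sound because any two distinct nodes at distance at most $2$ in $G$ are either adjacent, and then caught directly, or share a common neighbor $v$ that inspects both colors and enforces the inequality. Since the $\PLS$ verification phase lets each node read its neighbors' certificates, all checks are local, and as every field is $O(\log n)$ bits we obtain the claimed certificate size (and thus also a $\Sigma_1\LD$ protocol).

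The step I expect to be the most delicate is the exact certification of $d$: the upper bound $\deg(v)\le d$ is immediate, but pinning down that $d$ is \emph{attained} requires the auxiliary bit-propagation along $T$ together with a careful soundness argument ruling out a prover that inflates $b(\mathrm{root})$ without a genuine degree-$d$ witness at the bottom of the chain. This exactness matters because an overclaimed $d$ would enlarge the admissible color range and break the guarantee $c(v)\in\{1,\dots,\min\{d^2+1,n\}\}$ for the true maximum degree. The certification of $n$ and the coloring checks, by contrast, are routine once the tree is in place.
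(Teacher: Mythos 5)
Your approach is essentially the paper's: the published argument is only a two-sentence proof idea (a rooted spanning tree with subtree counts to certify $n$; each node checking that its neighbors' colors are pairwise distinct and distinct from its own to certify the distance-2 coloring), and your construction matches it, with considerably more detail. Your extra machinery for certifying the exact maximum degree $d$ (a degree upper-bound check plus an OR-bit propagated up the tree to witness attainment) is correct but not something the paper does or needs: where the lemma is applied, the colors are simply taken in $\{1,\dots,n\}$, and since $\min\{d^2+1,n\}\le n$ a range check against the certified $n$ already gives the $O(\log n)$-bit bound; the $d^2+1$ part of the statement is an existence guarantee for the honest prover (greedy coloring of $G^2$), exactly as you observe.

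The one genuine hole is in your soundness claim for the count: ``any incorrect $n$ forces a violation at some node'' is false for the checks as you list them. A dishonest prover on a connected graph with $2n$ nodes can declare two roots and split the graph into two vertex-disjoint trees of $n$ nodes each; every node then passes the parent/distance check, the subtree-size recurrence, the ``root's subtree size equals the claimed $n$'' check, and agreement on $n$, yet the certified value is half the truth. The standard fix, which is part of the construction in~\cite{KKP10} that you cite, is to also write the root's identifier into every certificate, have every node check agreement on it with all of its neighbors (connectivity then forces a single claimed root identifier network-wide), and have every node at distance $0$ check that this identifier is its own, so that uniqueness of identifiers rules out a second root. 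With that $O(\log n)$-bit field added your argument goes through; the same fix also repairs the $d$-attainment gadget, which inherits the multi-root issue since it runs on the same tree.
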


\begin{proofidea}
The certification of the number of nodes can be done by using a rooted spanning tree and by counting nodes in the sub-trees. The certification of a proper distance-2 coloring can be done by assigning colors to nodes, with every node checking that all its neighbors have different colors, all different from its own color. 
\end{proofidea}

Since triangle-freeness is a local property, nodes do not need to be represented by identifiers that are different throughout the entire network. Instead, identifiers resulting from a proper distance-2 coloring suffice. Therefore, using Lemma~\ref{lem:distance-2coloring}, we can assume that nodes are provided with identifiers in $\{1,\dots,n\}$ such that $\id(u)\neq\id(v)$ whenever the distance between $u$ and $v$ is at most $2$.

\begin{theorem}\label{theo:triaglefreeness}
For every $\alpha = O(n)$, there exists a Merlin-Arthur protocol for triangle-freeness, using $O(\log n)$ bits of shared randomness, with $O(\frac{n}{\alpha}\log n)$-bit certificates and $O(\alpha\log n)$-bit messages between nodes. In short $\tfree \in \MA(\frac{n}{\alpha}\log n,\; \alpha\log n)$.  
\end{theorem}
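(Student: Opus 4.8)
The plan is to reduce triangle-freeness to a collection of \seq\ (SumZero) instances --- one per node, playing the role of a \emph{referee} for the triangles through it --- and to solve each instance by algebraic fingerprinting, where Merlin supplies a low-degree ``partial-sum'' polynomial and Arthur spot-checks it at a single point drawn from the shared randomness. First I would invoke Lemma~\ref{lem:distance-2coloring} to equip every node with the certified value $n$ and an identity in $\{1,\dots,n\}$ that is distinct within every ball of radius $2$. This guarantees that the neighbours of any node carry pairwise distinct identities, and that the three vertices of any triangle carry distinct identities, so the neighbourhood of a node $w$ is faithfully encoded by its indicator vector $\chi_w\in\{0,1\}^n$ indexed by identity. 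I reshape $\chi_w$ into an $a\times b$ array $M^{(w)}$ with $a=\lceil n/\alpha\rceil$ and $b=\alpha$, and fix a field $\fq$ with $q=\Theta(n^3)$, so that $\log q=O(\log n)$, the points $1,\dots,a$ are distinct in $\fq$, and no integer in $\{0,\dots,n^2\}$ vanishes modulo $q$.

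The reduction is the observation that, for adjacent $u,v$, the inner product $\langle\chi_u,\chi_v\rangle$ counts the common neighbours of $u$ and $v$, i.e.\ the triangles on the edge $\{u,v\}$. Hence $G$ is triangle-free if and only if, for every node $v$,
\[
S_v \;=\; \sum_{u\in N(v)} \langle \chi_u,\chi_v\rangle \;=\;0 ,
\]
where $S_v$ is twice the number of triangles through $v$. This is exactly a \seq\ instance whose referee is $v$ and whose players are the neighbours of $v$.

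For the protocol, for each column $j\in\{1,\dots,b\}$ and each node $w$ let $X^{(w)}_j(s)=\sum_{i=1}^a M^{(w)}_{i,j}L_i(s)$ be the degree-$(a-1)$ interpolation of the $j$-th column over $1,\dots,a$ (Lagrange basis $L_i$), so that $\langle \chi_u,\chi_v\rangle=\sum_{i=1}^a\sum_{j} X^{(u)}_j(i)\,X^{(v)}_j(i)$. Setting $R_v(s)=\sum_{u\in N(v)}\sum_{j} X^{(u)}_j(s)\,X^{(v)}_j(s)$, a polynomial of degree at most $2a-2$, gives $S_v=\sum_{i=1}^a R_v(i)$. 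In the \MA\ protocol Merlin sends each node $v$ the coefficient vector of a claimed $R_v$, the certificate, of size $O(a\log q)=O(\frac n\alpha\log n)$ bits. Arthur then draws one point $s_0\in\fq$ from the shared randomness ($O(\log n)$ bits); each node $w$ broadcasts to its neighbours the fingerprint $\big(X^{(w)}_1(s_0),\dots,X^{(w)}_b(s_0)\big)$ of $O(b\log q)=O(\alpha\log n)$ bits; and each $v$ accepts iff (i)~$\sum_{i=1}^a R_v(i)=0$ and (ii)~$R_v(s_0)=\sum_{u\in N(v)}\sum_{j} X^{(u)}_j(s_0)\,X^{(v)}_j(s_0)$, the right-hand side being assembled from $v$'s own data and the fingerprints just received.

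Completeness is immediate: on a triangle-free instance the honest $R_v$ passes both tests everywhere with probability $1$. For soundness, if $G$ has a triangle then $S_v\neq 0$ for some $v$; any certificate with $\sum_i R_v(i)=0$ must then differ from the true degree-$(\le 2a-2)$ polynomial, so by the Schwartz--Zippel lemma test~(ii) fails at the random $s_0$ with probability $\ge 1-(2a-2)/q\ge\nicefrac23$, and since the fingerprints are produced by Arthur (not Merlin) that node rejects. I expect the main obstacle to be precisely what forces the referee decomposition: a naive per-edge disjointness protocol would attach a degree-$\Theta(a)$ polynomial to each incident edge and inflate a high-degree node's certificate by a factor equal to its degree. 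The key economy is that, by linearity, all of $v$'s edges collapse into the single aggregate polynomial $R_v$ of degree $O(n/\alpha)$, and each node's fingerprint depends only on its own neighbourhood and is therefore broadcast once rather than recomputed per edge; this is what keeps certificates at $O(\frac n\alpha\log n)$ and messages at $O(\alpha\log n)$. A secondary point to verify is the choice of $q$, large enough to make both the Schwartz--Zippel error and the integer wrap-around of $S_v$ negligible while keeping field elements on $O(\log n)$ bits.
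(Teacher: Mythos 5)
Your proposal is correct and follows essentially the same route as the paper's proof: reshape the (distance-2-certified) identifier space into an $\lceil n/\alpha\rceil\times\alpha$ grid, take low-degree extensions of the columns of each node's neighbourhood indicator, have Merlin certify the degree-$O(n/\alpha)$ aggregate polynomial at each node, and have each node broadcast its $\alpha$ column evaluations at a single shared random field point, with soundness from the polynomial identity test. The only (immaterial) deviations are that you test the single aggregated sum $\sum_i R_v(i)=0$ rather than $R_v(i)=0$ at every grid point, compensating with a slightly larger field $q=\Theta(n^3)$ instead of $\Theta(n\alpha)$.
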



\begin{proof}
We identify the space $\{1,\dots,n\}$ of IDs with $[n/\alpha]\times[\alpha]$, for some $\alpha=O(n)$ of choice.
Each node $u$ thus has a set $S_u$ of pairs of the form $(i,t)$ representing its neighbors.
Let $q$ be a prime such that $cn\alpha <q\leq 2cn\alpha$,
for a large enough constant $c>1$, and let $\fq$ be the field of $q$ elements.
Each node $u$ represents $S_u$ as $\alpha$ functions
$
\psi_{S_u,t}:[n/\alpha]\to\set{0,1},
$
where
$
\psi_{S_u,t}(i)=1 \iff (i,t)\in S_u.
$
Node $u$ then extends these functions to polynomials
$
\Psi_{S_u,t}:\fq\to\fq
$
of degree at most $n/\alpha-1$ that agree with $\psi_{S_u,t}$ on $[n/\alpha]$. To make sure that an edge $\{u,v\}$ is not a part of a triangle,
the nodes $u$ and $v$ need to verify that $S_u\cap S_v=\emptyset$,
which is equivalent to $\Psi_{S_u,t}(i)\cdot\Psi_{S_v,t}(i)=0$
for all $i\in[n/\alpha]$ and $t\in[\alpha]$.
Node $u$ then defines its \emph{neighbors polynomials}
$
\Psi_{uv,t}=\Psi_{S_u,t}\cdot\Psi_{S_v,t}
$
for every $v\in S_u$, and every $t\in[\alpha]$. Let
$
\Psi_u=\sum_{t\in[\alpha]}\sum_{v\in S_u} \Psi_{uv,t}.
$
The degree of each polynomial $\Psi_{uv,t}$ is at most $2(n/\alpha-1)$,
and thus this is also the case for the degree of $\Psi_u$.
Node $u$ is not part of a triangle
if and only if $\Psi_{uv,t}(i)=0$ for every $t\in[\alpha]$, $i\in[n/\alpha]$ and $v\in S_u$. Since $q>n\alpha$, it follows that $u$ is not part of a triangle if and only if $\Psi_u(i)=0$ for every $i\in[n/\alpha]$.
(For each $i$, we have a sum of $n\alpha$ values, each in $\set{0,1}$.)

Merlin assigns to node $u$ the certificate $\Phi_u$, which is supposed to be equal to $\Psi_u$.
Since this is a polynomial of degree at most 
$2(\frac{n}{\alpha}-1)$, the same number of coefficients are sufficient for representing $\Phi_u$. Therefore, the certificates are of $O(\frac{n}{\alpha}\log q)$ bits,
which are actually $O(\frac{n}{\alpha}\log n)$ bits, as $q\leq 2cn\alpha=O(n^2)$.

Each node $u$ first verifies that $\Phi_u(i)=0$ for every $i\in[n/\alpha]$.
Then, it checks that indeed $\Phi_u=\Psi_u$, as follows.
The protocol uses the shared randomness to choose a field element $i_0\in\fq$ known to all nodes.
Each node $v$ broadcasts $\{\Psi_{S_v,t}(i_0):t\in[\alpha]\}$ to each of its neighbors, using $O(\alpha \log q)\leq O(\alpha \log n)$ bits of communication. 
Node $u$ then computes 
\begin{align*}
\Psi_u(i_0)
&= \sum_{t\in [\alpha]}\sum_{v\in S_u} \Psi_{uv,t}(i_0) 
= \sum_{t\in [\alpha]}\sum_{v\in S_u} \Psi_{S_u,t}(i_0)\cdot\Psi_{S_v,t}(i_0)
\end{align*}
and accepts if and only if $\Phi_u(i_0)=\Psi_u(i_0)$. 
The probability that 
two non-equal polynomials on $\fq$ of degree at most $2(\frac{n}{\alpha}-1)$
are equal at a random point $i$ is at most $2(\frac{n}{\alpha}-1)/q$. Therefore, since $q>cn\alpha$, the probability of error can be made arbitrarily small by choosing $c$ large enough.
\end{proof}

\noindent\textit{Remark.} Similar trade-offs can still be obtained even if nodes have only access to distributed randomness. For instance, in two  rounds, with the same notations as in the proof of Theorem~\ref{theo:triaglefreeness}, we can have each node $u$ choose its own random $i_u\in\fq$, and send it to all its neighbors $v$. 
To get a 1-round $\MA$ protocol,  
with $O(\frac{n^2}{\alpha}\log n)$-bit certificates, and $O(\alpha \log n)$-bit communication, Merlin sends to $u$ a specific certificate for each edge incident to $u$. 
That is, $u$ gets a polynomial $\Phi_v$ for each $v\in S_u$,
which equals (allegedly) to 
$
\Psi_{uv}(i)
= \sum_{t\in T} \Psi_{uv,t}(i)
= \sum_{t\in T} \Psi_{S_u,t}(i)\cdot\Psi_{S_v,t}(i)
$
on each $i\in\fq$. 
In this case, $v$ chooses $i_v$ at random locally,
and sends to $u$ the value $i_v$ in addition to the $\alpha$ evaluations $\Psi_{S_v,t}(i_v)$ for all $t\in[\alpha]$.

A particular application of Theorem~\ref{theo:triaglefreeness} is the existence of a Merlin-Arthur protocol with both space and message complexities $\widetilde{O}(\sqrt{n})$. This contrasts with the following lower bound.

\begin{proposition}
\label{prop: lb for triangle-freeness}
Any proof-labeling scheme for triangle-freeness must have certificate size at least $n/e^{O(\sqrt{\log n})}$ bits.
\end{proposition}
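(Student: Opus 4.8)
The plan is to derive the lower bound from the standard connection between proof-labeling schemes and two-party communication complexity, in the indistinguishability (``cut-and-paste'') form used for the locally-checkable-proof lower bounds of~\cite{GS16}, instantiated on a hard family of instances built from Ruzsa--Szemer\'edi graphs. The very shape of the target bound is the main clue: the quantity $n/e^{O(\sqrt{\log n})}$ is exactly the induced-matching size available in such graphs, which in turn comes from Behrend's construction of a subset $B\subseteq\{1,\dots,m\}$ of size $|B|=m/e^{O(\sqrt{\log m})}$ containing no three-term arithmetic progression. So the first step is to fix such a set $B$ and use it as the combinatorial engine of the construction.

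Concretely, I would build the tripartite graph on parts $V_1,V_2,V_3$, each identified with $\mathbb{Z}_m$, whose edges are $\{x,x+b\}$ between $V_1$ and $V_2$, $\{y,y+b\}$ between $V_2$ and $V_3$, and $\{x,x+2b\}$ between $V_1$ and $V_3$, for all $b\in B$. Since $z-x=(y-x)+(z-y)$, a triangle $(x,y,z)$ forces $b_1+b_2=2b_3$ with $b_1,b_2,b_3\in B$, which by the absence of three-term progressions collapses to $b_1=b_2=b_3$. Hence every triangle has the form $(x,x+b,x+2b)$, the $m\,|B|$ triangles are pairwise edge-disjoint, and the edges split into induced matchings. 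This \emph{local sparsity} --- each edge lies in exactly one triangle, and the matchings are induced --- is precisely what should prevent any succinct local certificate from summarising the global triangle-freeness structure.

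Next I would split the graph along a cut into an ``Alice part'' and a ``Bob part'' and consider the family of triangle-free subgraphs obtained by switching triangle edges on and off so that, in every member, each potential triangle $(x,b)$ is missing at least one edge. Reading the toggled edges as a pair of sets over the index set $V_1\times B$ of size $m\,|B|$, triangle-freeness becomes a disjointness condition, and a disjointness-style fooling set of size $2^{\Omega(m|B|)}$ supplies a family in which any two distinct members, once spliced across the cut, complete at least one triangle. If the maximum certificate size were $\sigma$, the number of distinct certificate patterns on the $O(m)$ boundary vertices would be at most $2^{O(m\sigma)}$; once this is below the family size, two members share identical boundary certificates, their hybrid is accepted at every node yet contains a triangle, contradicting soundness. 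Taking $n=\Theta(m)$ then yields $\sigma=\Omega(|B|)=n/e^{O(\sqrt{\log n})}$.

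The hard part will be making the splice \emph{simultaneously} triangle-creating and locally indistinguishable from a genuine instance at every node, including the boundary vertices whose incident cross-edges are disturbed by the splice. Because a triangle has only three vertices, a naive two-sided cut cannot keep a freely varying edge strictly inside each side of every triangle, so the reconciliation of local views at the cut cannot be done by simply freezing the crossing edges; it has to be routed through the edge-disjointness and inducedness of the Ruzsa--Szemer\'edi matchings, which guarantee that toggling the relevant edges neither corrupts the consistency of neighbouring views nor introduces spurious triangles in the pure instances. Getting this bookkeeping right, together with the quantitative conversion from $|B|$ and $m$ to the vertex count $n$, is where the real work lies; the communication-complexity collision step itself is then standard.
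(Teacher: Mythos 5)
Your choice of hard instances is exactly the one the paper relies on (via~\cite{DruckerKO14}): the Behrend/Ruzsa--Szemer\'edi tripartite graph in which every edge lies in exactly one triangle. The gap is in the communication-complexity step. A two-party cut-and-paste argument in the style of~\cite{GS16} provably cannot work for triangle-freeness, for the reason you yourself flag but do not resolve: under any bipartition of the vertices, a triangle that crosses the cut has two vertices on one side, hence two of its three edges are crossing edges. For the splice of two accepted instances to be a well-defined configuration with consistent local views at the boundary, the two instances in a colliding pair must agree on all crossing edges; but then the presence of any triangle in the spliced instance $(A_1,B_2)$ is determined by the fixed crossing edges together with a single internal edge of one side, so every triangle of the splice already appears in $(A_1,B_1)$ or in $(A_2,B_2)$ --- both triangle-free. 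The splice is therefore always triangle-free, no contradiction with soundness ever arises, and the fooling-set count is vacuous. Appealing to the edge-disjointness and inducedness of the matchings does not repair this: those properties control which triangles exist in the base graph, not whether a two-sided splice can create one.

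The correct route, and the one the paper takes, is inherently \emph{multiparty}: one party per node of the network, each holding the indicator of its incident edges, so that the three edges of a candidate triangle $(x,x+b,x+2b)$ are held by three distinct parties and triangle-freeness becomes an $n$-party number-in-hand disjointness instance over the universe of (vertex, $b$) pairs. One then needs a lower bound for multiparty disjointness in the shared-blackboard/broadcast model that holds \emph{non-deterministically} --- this is the content of~\cite{RY15}, and it is a substantially harder ingredient than a two-party fooling set --- followed by the standard translation from non-deterministic \textsc{broadcast-congest} lower bounds to PLS certificate-size lower bounds as in~\cite{GS16,Censor-HillelPP17}. So your construction and the target quantity $|B|=m/e^{O(\sqrt{\log m})}$ are right, but the reduction must be rebuilt around $n$-party disjointness rather than a two-party collision argument.
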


\begin{proofidea}
The lower bound graph construction for the \textsc{broadcast-congested-clique} model~\cite{DruckerKO14} obviously gives a lower bound to the weaker, \textsc{broadcast-congest} model. This lower bound is based on a lower bound for  multiparty communication complexity of disjointness~\cite{RY15}, which also applies for the non-deterministic case. Finally, as noted in previous work on PLS~\cite{GS16,Censor-HillelPP17}, a lower bound for non-deterministic communication complexity in the \textsc{broadcast-congest} model implies a certificate-size lower bound for PLS.
\end{proofidea}

We can then conclude that, as opposed to the $\MA$ protocol of Theorem~\ref{theo:triaglefreeness},  any PLS for triangle freeness must use almost-linear communication. Put differently, the trivial protocol of sending all the list of neighbors is almost optimal, even if non-determinism is used.

\section{Distributed vs. Shared Randomness}

In this section we compare the power of distributed interactive protocols using \emph{shared} randomness (the nodes have access to a common source of random coins) with the power of protocols using \emph{distributed} randomness (each node has access to a private source of random coins only) --- in both cases, the outcomes of the random trials are public to Merlin. 

\subsection{Interactive Protocols that use Shared Randomness}

\paragraph*{Certifying Solutions to Optimization Problems.}

We consider optimization problems on graphs, such as finding a minimum dominating set, or a maximum independent set, and their weighted counterparts. Similar problems where previous studied in the context of non-interactive distributed verification~\cite{FFH16}.
In such a problem~$\pi$, an admissible solution is a set $S$ of nodes satisfying a set of constraints depending on~$\pi$, and the quality of a solution $S$ is measured by its weight $w(S)=\sum_{s\in S}w(s)$ where $w(s)$ is the weight of node $s$, given as input (where $w(s)=1$ for every node $s$ when considering only the cardinality of the solution). 
We assume that all weights are polynomial in the size~$n$ of the network. A set $S$ is distributively encoded by a Boolean variable~$x(v)$ at each node~$v$, indicating whether the node is in~$S$ or not. 
We consider two distributed languages: 
\begin{itemize}
\item The language $\adm_\pi$ is composed of all configurations $(G,(w,x),\id)$ such that $x$ encodes an \emph{admissible} solution for $\pi$ in the weighted graph $G$ (weights are assigned by $w$). 
\item The language $\optim_{\pi,k}$, for $k\geq 0, $ is composed of all configurations $(G,w,\id)$ such that there exists an admissible solution for $\pi$ of weight at most~$k$ (respectively, at least~$k$) for the minimization (respectively, maximization) problem~$\pi$. 
\end{itemize}
This framework can easily be extended to study problems whose solutions are sets of edges.

\begin{theorem}\label{theo:optimization}
For any optimization problem $\pi$ on graphs such that checking whether a given solution $x$ is admissible can done by exchanging $O(\log\log n)$ bits between neighbors, there exists a Merlin-Arthur protocol for $\optim_{\pi,k}$, using $O(\log n)$ bits of shared randomness, with $O(\log n)$-bit certificates and $O(\log\log n)$-bit messages between nodes. In short, $\optim_{\pi,k}\in\MA(\log n,\; \log\log n)$.
\end{theorem}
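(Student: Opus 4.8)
The plan is to let Merlin exhibit an admissible solution $S$ together with a rooted spanning tree carrying the partial weight-sums of $S$, and to let Arthur verify every arithmetic relation modulo a single \emph{shared} random prime of polylogarithmic size, so that all messages carry only $O(\log\log n)$ bits. I treat the minimization case; maximization is symmetric. First I would invoke Lemma~\ref{lem:distance-2coloring} to equip every node with the certified value $n$ and a distance-$2$ coloring, so that a node can name any neighbor by its (locally unique) color. Merlin then provides to each node $v$: its membership bit $x(v)$ indicating whether $v\in S$; a pointer to its parent in a rooted spanning tree, encoded by the parent's color, together with a depth label $d(v)$; the subtree weight-sum $s(v)=\sum_{u}x(u)w(u)$ over the subtree rooted at $v$; and the subtree size $\mathrm{sz}(v)$. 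As all weights are polynomial in $n$, each of these numbers fits in $O(\log n)$ bits, so the certificates have size $O(\log n)$. Using the $O(\log n)$ bits of shared randomness the nodes agree on a prime $p$ drawn uniformly from the primes below a threshold $T=\mathrm{polylog}(n)$, so that $\log p=O(\log\log n)$.

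The single-round verification then proceeds with every message of length $O(\log\log n)$. Arthur first checks admissibility of $x$ using the assumed $O(\log\log n)$-bit protocol. Next, each node marks its parent to its neighbors (one bit per incident edge) and broadcasts $d(v)\bmod p$, $s(v)\bmod p$ and $\mathrm{sz}(v)\bmod p$; a node then verifies, modulo $p$, the depth relation $d(\mathrm{parent}(v))=d(v)-1$, the sum recurrence $s(v)=x(v)w(v)+\sum_{c}s(c)$ and the count recurrence $\mathrm{sz}(v)=1+\sum_{c}\mathrm{sz}(c)$, where $c$ ranges over the neighbors that marked $v$ as their parent, and also checks that all neighbors agree on the value $n$. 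Finally, every node declared to have no parent (the root) checks that $\mathrm{sz}=n$ and $s\le k$.

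For completeness, on a yes-instance Merlin supplies a genuine admissible $S$ of weight at most $k$ with the true tree labels; every relation then holds over the integers, hence modulo any $p$, and all nodes accept. For soundness, fix a no-instance, so every admissible solution has weight $>k$, and assume $x$ passes the admissibility test, so $x$ is admissible with true weight $W^\ast>k$. If all depth relations hold \emph{over the integers}, then depths strictly decrease toward the root, the parent pointers are acyclic, and the root check $\mathrm{sz}=n$ together with the exact count recurrence forces a genuine spanning tree; the exact sum recurrences then give $s(\mathrm{root})=W^\ast>k$, so either the root rejects outright or Merlin must violate the sum or count recurrence over the integers. Hence in every case at least one verified relation fails over the integers, i.e.\ some checked quantity $D\neq0$ with $|D|=\mathrm{poly}(n)$ is being tested, and the protocol is fooled only when $p$ divides $D$. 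The crucial point is that \emph{all} nodes use the \emph{same} prime: the protocol fails only if $p$ divides the greatest common divisor of all the nonzero discrepancies, a fixed integer below $\mathrm{poly}(n)$ and so with only $O(\log n)$ prime divisors. Since $p$ is uniform over $\Theta(T/\log T)$ primes, the failure probability is $O(\log n)\cdot\log T/T=o(1)$, pushed below $\nicefrac13$ by enlarging $T$ while keeping $\log p=O(\log\log n)$.

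The hard part is exactly this soundness analysis of the tree structure. Verifying the depth relations only modulo a small prime is dangerous because a cheating Merlin could try to close a cycle whose depths wrap around modulo $p$; the naive fix of a per-edge union bound over the $n$ edges would force $p>n$ and blow the message size up to $O(\log n)$. The way around this, and the point where shared randomness is genuinely needed, is that the adversary must fool the \emph{same} $p$ on every edge simultaneously, so the bad event collapses to $p$ dividing one fixed $\mathrm{poly}(n)$-bounded integer; this is precisely the leverage that shared randomness gives and that private distributed randomness would not, and it is what lets the fingerprinting of the depth, sum, and count relations all run at $O(\log\log n)$ communication while certificates stay at $O(\log n)$, yielding $\optim_{\pi,k}\in\MA(\log n,\;\log\log n)$.
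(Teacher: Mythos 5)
Your core verification idea is sound and genuinely different in its details from the paper's: where the paper dispatches the root-agreement and depth checks to the standard shared-randomness equality protocol and the subtree-sum check to Nisan's multiparty $\seq$ protocol~\cite{N93}, you fingerprint every arithmetic relation modulo a single shared random prime $p$ of polylogarithmic magnitude. Your soundness analysis of that part is correct: Merlin commits to the certificate before $p$ is drawn, so the nonzero integer discrepancies are fixed, each is bounded by $\mathrm{poly}(n)$ in absolute value, and a false acceptance requires $p$ to divide one fixed such discrepancy --- an event of probability $O(\log n\cdot\log T/T)$, which is $o(1)$ for $T=\log^3 n$ while keeping $\log p=O(\log\log n)$. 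This is a clean, self-contained substitute for the black boxes the paper cites.

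The gap is in how you force the parent pointers to form a \emph{single} spanning tree. You argue: acyclicity from the depth relations, plus the root check $\mathrm{sz}=n$ against the value of $n$ certified by Lemma~\ref{lem:distance-2coloring}. This fails on two counts. First, verifying that lemma's proof-labeling scheme requires neighbors to exchange its $O(\log n)$-bit certificates (and to compare $O(\log n)$-bit colors), which already breaks your $O(\log\log n)$ message budget, so you cannot invoke it as a free black box. Second, and more fundamentally, certifying $n$ is itself done by counting nodes in the subtrees of a certified spanning tree, so using ``certified $n$'' to establish spanning-ness is circular: on a no-instance a cheating Merlin can present a forest of two trees, declare a common value $\tilde n=n/2$ at every node (so your neighbor-agreement check on $\tilde n$ passes), satisfy every count and sum recurrence exactly over the integers, and have each root see $\mathrm{sz}=\tilde n$ together with a partial weight at most $k$ even though the total weight of the admissible solution exceeds $k$; nothing in your protocol detects $\tilde n\neq n$. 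The repair is the one the paper uses: Merlin writes the root's \emph{identifier} into every certificate, neighbors check agreement on it via the $O(1)$-bit shared-randomness equality protocol, and a parentless node checks that its own identifier equals the claimed root identifier; uniqueness of identifiers then forces a unique root, which together with your acyclicity argument yields spanning-ness with no reference to $n$ at all. With that substitution (and with parent pointers encoded by port numbers rather than by distance-2 colors), your protocol goes through.
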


\begin{proof}
Let $\pi$ be an optimization problem, and assume, w.l.o.g., that $\pi$ is a minimization problem (e.g., minimum dominating set). Let us consider a legal configuration $(G,w,\id)$, i.e., there exists $S\subseteq V$ such that $w(S)\leq k$.  Merlin assigns a Boolean $x(v)$ to every node~$v$, stating whether $v$ is in $S$ or not. Note that, by hypothesis, the admissibility of $S$ is checkable by exchanging only $O(\log\log n)$ bits between neighbors. (Just one bit suffices in the case of minimum dominating set.) In order to measure the quality of the solution, Merlin also provides the nodes with a distributed encoding of a  tree $T$ spanning $G$ rooted at an arbitrary node~$r$ (i.e., provide every node $v$ with the identifier of its parent $p(v)$), and a distributed proof that $T$ is a spanning tree. Again, it is known (see~\cite{KKP10}) that certificates of $O(\log n)$ bits suffice for certifying such a tree (by using, e.g., $(\id(r),cpt(v))$ where $cpt(v)$ is the distance to~$r$ in~$T$). Every node $u$ is also given the partial sum $s(u)=\sum_{v\in V(T_u)\cap S}w(v)$ as part of its certificate, where $T_u$ denotes the subtree of $T$ rooted at~$u$. Each certificate consumes $O(\log n)$ bits at each node. 

Verifying at every node $u$ that (1)~the identifier of $r$ given to $u$ is the same as the one given to its neighbors in $G$, (2)~its distance $cpt(u)$ to $r$ in $T$ is one more than the one of its parent in $T$, and one less than the one of its children in $T$, and (3)~$s(u)=\sum_{p(v)=u}s(v) +x(u)\cdot w(u)$, can trivially be done by exchanging $O(\log n)$ bits between neighbors. We show how, using shared randomness, we can reduce this to $O(\log\log n)$ bits, without increasing the certificate size.

The tests~(1) and~(2) are equality tests, which can be done by exchanging only $O(1)$ bits between neighbors with access to $O(\log n)$ shared random bits~\cite{KN97}. The test~(3) consists of checking that $x(u)\cdot w(u)-s(u)+\sum_{p(v)=u}s(v)=0$ which is known as the $\seq$ problem in multi-party communication complexity. The protocol in~\cite{N93} achieves this test with $O(\log\log n)$-bit communication complexity, using $O(\log\log n)$ shared random bits. 
\end{proof}

For Theorem~\ref{theo:optimization}, we assumed that all weights are polynomial in the size~$n$ of the network. If the weights are $m$-bit long, we can adapt the proof, and show that there exists a Merlin-Arthur protocol for $\optim_{\pi,k}$, using $O(\log(m + \log n))$ bits of shared randomness, with $O(\log n)$-bit certificates and $O(\log n)$-bit messages between nodes. In short,  $\optim_{\pi,k}\in\MA(\log n,\; \log n)$ even with weights exponential in~$n$.

\paragraph*{Certifying Coloring and Lucky Labeling.}
Similar arguments as the ones used to establish Theorem~\ref{theo:optimization} allow us to verify specific optimization problems, for which checking that a solution is admissible is not easy. We exemplify this with the coloring problem, and its variant the \emph{lucky labeling} problem~\cite{AHPSV16,CGZ09}.

Checking that a given graph coloring is proper is a simple task,
which can be solved by having each node broadcast its color.
Here, we show that verifying a given $c$-coloring 
can be done using a $\MA$ protocol with $O(\Delta\log\log c)$-bit certificates and $O(1)$ bits of communication
in networks of maximum degree~$\Delta$. 
This stands in contrast to the trivial verification algorithm where the communication is of $O(\log c)$ bits.
In the $\MA$ protocol, Merlin provides every node $v$ with 
the location $p(v,u)$ of a bit where the colors of $u$ and $v$ differ, for each of its neighbors $u$.
A node $v$ then checks with each neighbor $u$ the fact that $p(v,u)=p(u,v)$, and at the same time sends to $u$ the  value of the corresponding bit.
The Merlin step requires certificates of $O(\Delta \log\log c)$ bits. The Arthur step requires constant communication when shared randomness is available, using the equality protocol.

\begin{lemma}\label{lem:MA for coloring}
There is a Merlin-Arthur protocol for verifying a given $c$-coloring using $O(\log\log c)$ bits of shared randomness, certificates of size $O(\Delta\log\log c)$ bits, and constant communication complexity. 
In short, $c$-coloring of graphs of degree at most $\Delta$ is in  $\MA(\Delta\log\log c,\;1)$.
\end{lemma}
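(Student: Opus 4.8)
The plan is to follow the $\MA$ protocol sketched in the paragraph immediately preceding the statement, using the equality protocol to reduce communication to constant. Fix a proper $c$-coloring $\chi$, where each node $v$ holds its color $\chi(v)\in\{0,\dots,c-1\}$ encoded in $\lceil\log c\rceil$ bits. For an edge $\{u,v\}$ the colors $\chi(u)$ and $\chi(v)$ differ, so there is at least one bit position where they disagree. First I would have Merlin provide every node $v$, for each neighbor $u$, with the certificate $p(v,u)\in\{0,\dots,\lceil\log c\rceil-1\}$ pointing to such a differing bit position. Since each position is an index into a $\lceil\log c\rceil$-bit string, it can be written using $O(\log\log c)$ bits, and a node of degree at most $\Delta$ receives $\Delta$ such pointers, giving certificates of size $O(\Delta\log\log c)$ bits as claimed.

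Next I would describe the verification algorithm $\verifier$, which must run with only constant communication per edge while using $O(\log\log c)$ shared random bits. Over each edge $\{u,v\}$ the two endpoints must check two things: (i)~that the prover was consistent, i.e.\ $p(v,u)=p(u,v)$, and (ii)~that the colors actually differ at the claimed position, i.e.\ the $p(v,u)$-th bit of $\chi(v)$ differs from the $p(u,v)$-th bit of $\chi(u)$. Each node simply sends to each neighbor $u$ the single bit of $\chi(v)$ located at position $p(v,u)$. A node accepts on edge $\{u,v\}$ iff the bit it received from $u$ differs from the bit it sent, which handles~(ii) in $O(1)$ bits. For the consistency check~(i), the values $p(v,u)$ and $p(u,v)$ are $O(\log\log c)$-bit strings, and comparing them is exactly an instance of the equality problem; invoking the two-party equality protocol with access to shared randomness tests $p(v,u)=p(u,v)$ using $O(1)$ communication and $O(\log\log c)$ shared random bits~\cite{KN97}. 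Thus the total communication per edge is constant, matching $\gamma=O(1)$.

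For correctness I would argue the two directions of the $\MA$ definition. On a legal instance (a proper $c$-coloring exists), Merlin sets $p(v,u)$ honestly to a genuine differing position with $p(v,u)=p(u,v)$; then the equality test passes deterministically and the received bit always differs from the sent bit, so all nodes accept with probability~$1\ge\nicefrac23$. On an illegal instance the coloring is improper, so some edge $\{u,v\}$ is monochromatic, $\chi(u)=\chi(v)$; then for \emph{any} prover certificates, either $p(v,u)\ne p(u,v)$, in which case the equality protocol rejects with probability at least $\nicefrac23$, or $p(v,u)=p(u,v)=\ell$, but since the two colors are identical their $\ell$-th bits coincide, so check~(ii) fails deterministically. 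Either way at least one of $u,v$ rejects with probability at least $\nicefrac23$, as required.

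The main obstacle, and the only genuinely non-routine point, is check~(i): without shared randomness the consistency test $p(v,u)=p(u,v)$ would naively cost $\Theta(\log\log c)$ bits, defeating the constant-communication goal. The resolution is precisely to treat it as an equality instance and apply the shared-randomness equality protocol of~\cite{KN97}, which is where the $O(\log\log c)$ bits of shared randomness are consumed. Everything else -- the certificate size bound, the single-bit exchange for~(ii), and the two correctness cases -- is direct. One should also note that the $O(\log\log c)$ shared random bits suffice simultaneously for all edges, since the equality protocol can reuse the same public coins across the network, so the global error remains controlled by the constant $\nicefrac23$ bound per rejecting edge.
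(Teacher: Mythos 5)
Your proposal is correct and follows essentially the same route as the paper, which only sketches the argument in the paragraph preceding the lemma: Merlin supplies, per neighbor, a pointer to a differing bit position ($O(\Delta\log\log c)$ bits total), consistency of the pointers is checked via the shared-randomness equality protocol in $O(1)$ communication, and a single exchanged bit verifies the colors actually differ at the claimed position. Your write-up simply makes the correctness cases and the reuse of public coins explicit, which the paper leaves implicit.
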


We apply this observation to the so-called \emph{lucky labeling}. For a graph $G=(V,E)$ and a positive integer $c$, let $\ell:V\to \{1,\dots,c\}$, and, for every node $v$ of $G$, let $S(v)=\sum_{u\in N(v)}\ell(u)$. The labeling $\ell$ is \emph{lucky} if, for every two adjacent nodes $u$ and $v$, we have $S(u)\neq S(v)$. The lucky coloring number of a graph $G$, denoted by $\eta(G)$, is the least positive integer $c$ such that $G$ has a lucky labeling $\ell:V\to \{1,\dots,c\}$. We refer to~\cite{AHPSV16,CGZ09} for properties of the lucky coloring number. In particular, it is conjectured that $\eta(G)\leq \chi(G)$, and it is known that  $\eta(G)\leq \Delta^2-\Delta+1$, even for list lucky labeling. 

Verifying a given graph coloring is trivial, 
even without labels or interaction. 
To verify an upper bound on the chromatic number $\chi$ of a graph, there is a simple PLS giving each node a color.
The situation with lucky labeling is much more subtle: it is impossible to verify lucky labeling in a single round 
(this can be easily seen by considering different labelings on a short path).
There is a simple PLS for verifying a given labeling is lucky, or for bounding $\eta$ from above,
which gives each node $v$ the sum $S(v)$, and also labels for the latter case.

This PLS has label size and communication of $O(\log\Delta)$.
Applying RPLS~\cite{BFPS15} gives $\sigma = O(\Delta \log \Delta)$ and $\gamma= O(\log\log\Delta)$, which can be reduced to $\gamma= O(1)$ using shared randomness. 
Here, we show an MA protocol using shared randomness
with $\sigma=O(\Delta\log\log\Delta)$ 
and $\gamma=O(\log\log\Delta)$, establishing another trade-off between space and communication.

\begin{theorem}
For every $\lambda = O(n)$, 
there exists a Merlin-Arthur protocol for $\eta(G)\leq\lambda$, using $O(\log\log\Delta)$ bits of shared randomness, with $O(\Delta\log\log\Delta)$-bit certificates and $O(\log\log\Delta)$-bit messages between nodes. In short, lucky labeling is in  $\MA(\Delta\log\log\Delta,\; \log\log\Delta)$.  
\end{theorem}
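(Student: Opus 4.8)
The plan is to follow the template of Theorem~\ref{theo:optimization} and Lemma~\ref{lem:MA for coloring}, combining a $\seq$ test that certifies the neighborhood sums with the differing-bit gadget that certifies adjacent sums are distinct. First I would exploit the bound $\eta(G)\leq\Delta^2-\Delta+1$ (valid even for list lucky labeling, as recalled before the statement) to keep every magnitude small: on a yes-instance there is a lucky labeling $\ell:V\to\{1,\dots,\min\{\lambda,\Delta^2-\Delta+1\}\}$, so Merlin may always restrict himself to labels that fit in $O(\log\Delta)$ bits, whence every sum $S(v)=\sum_{u\in N(v)}\ell(u)\leq\Delta(\Delta^2-\Delta+1)=O(\Delta^3)$ also fits in $O(\log\Delta)$ bits. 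Merlin's certificate at a node $v$ then has three parts: the label $\ell(v)$ and the sum $S(v)$ (each $O(\log\Delta)$ bits), plus, for every neighbor $u$, the position $p(v,u)\in\{1,\dots,O(\log\Delta)\}$ of a bit in which $S(v)$ and $S(u)$ differ ($O(\log\log\Delta)$ bits apiece, $O(\Delta\log\log\Delta)$ in total). The $\Delta\log\log\Delta$ term dominates, yielding the claimed certificate size.

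The verification then splits into three tests, mirroring the plain PLS described before the statement but pushing communication down to $O(\log\log\Delta)$. (i)~Each node checks locally that $\ell(v)\leq\lambda$, with no communication. (ii)~To check that $S(v)$ is the genuine neighborhood sum, i.e.\ $S(v)-\sum_{u\in N(v)}\ell(u)=0$, I would run the multi-party $\seq$ protocol of~\cite{N93} with $v$ as referee and its neighbors as players: using a shared random prime $p$ drawn from a range of size $\mathrm{poly}(\log\Delta)$, each neighbor $u$ broadcasts the single residue $\ell(u)\bmod p$ and $v$ tests the congruence against $S(v)\bmod p$. Since all values are $O(\Delta^3)$, this costs $O(\log\log\Delta)$ bits of communication and $O(\log\log\Delta)$ shared random bits, and each node computes its residue once and reuses it on all incident edges. (iii)~To check $S(u)\neq S(v)$ on every edge, I reuse the gadget of Lemma~\ref{lem:MA for coloring}: $u$ and $v$ confirm $p(v,u)=p(u,v)=:\ell$ and exchange the $\ell$-th bits of their own certified sums, accepting only if those bits differ; this step is deterministic and again uses $O(\log\log\Delta)$ bits.

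Completeness is immediate: for the honest labeling all sums are correct and, by luckiness, adjacent sums differ, so a valid differing bit always exists and every test passes (the $\seq$ congruence holds with probability $1$ on a correct instance). For soundness, fix any prover strategy on a no-instance; after test~(i) all labels satisfy $\ell(v)\leq\lambda$, so the provided labeling is not lucky and some edge $\{u,v\}$ has equal true neighborhood sums. If the claimed value $\hat S(u)$ (or $\hat S(v)$) differs from the true sum, the $\seq$ test at that node rejects with probability at least $\nicefrac23$, because a random small prime is unlikely to divide the nonzero $O(\Delta^3)$ difference (tune the prime range by a constant). Otherwise the claimed sums equal the true, equal, sums, and then test~(iii) on $\{u,v\}$ rejects deterministically, since no bit can distinguish $\hat S(u)$ from $\hat S(v)$. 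In either case at least one node rejects with probability at least $\nicefrac23$.

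I expect the main obstacle to be the interface between tests~(ii) and~(iii): test~(iii) is sound only because the sums it compares have themselves been certified correct by test~(ii), so the argument must rule out a prover who corrupts a sum in order to fabricate a spurious differing bit, which is precisely what the probabilistic $\seq$ guarantee handles. A secondary technical point is letting the nodes instantiate the $\mathrm{poly}(\log\Delta)$ prime range; this needs only a polynomial upper bound on $\Delta$, which can be assumed known or certified by a standard spanning-tree scheme, and does not affect the stated complexities.
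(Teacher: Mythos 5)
Your proposal matches the paper's proof, which is exactly this three-step scheme: Merlin certifies $\ell(v)$ and the alleged sum $S(v)$, the nodes check that the sums form a proper coloring via the differing-bit gadget of Lemma~\ref{lem:MA for coloring}, and they verify $S(v)=\sum_{u\sim v}\ell(u)$ with the multiparty \seq{} protocol of~\cite{N93}. Your write-up is in fact more careful than the paper's (which is only three sentences), notably in bounding all magnitudes by $\mathrm{poly}(\Delta)$ via $\eta(G)\leq\Delta^2-\Delta+1$ and in spelling out the soundness interface between the sum-certification and the differing-bit test.
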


\begin{proof}
Merlin sends to every node $v$ its label $\ell(v)$ and the alleged sum $S(v)$ of the labels of its neighbors.
For the verification, the nodes verify that the sums $S(v)$ constitute a proper coloring,
using the protocol from Lemma~\ref{lem:MA for coloring}. 
In addition, they use a multiparty protocol for the \seq{} problem~\cite{N93,AKM13},
in order to verify $S(v)=\sum_{u\sim v}\ell(u)$.
\end{proof}

Note that a similar protocol can be used for the problem of verifying that a given labeling is lucky.

\subsection{A General Reduction Between Distributed and Shared Randomness}

Interestingly, assuming distributed randomness does not limit the power of Arthur-Merlin protocols compared to \emph{shared} randomness, up to a small additive factor in the certificate size. The same holds for Merlin-Arthur protocols, but solely up to one additional interaction between Arthur and Merlin. 
This result is not hard to achieve using a classical spanning-tree verification technique already applied in proof  labeling schemes~\cite{KKP10}. Yet, it both generalizes and simplifies the previous results on shared vs.\ distributed randomness~\cite{KOS18}, so we present it here in full.

\begin{theorem}\label{theo:general-reduction}
For any distributed language $\cL$, and for any number $k\geq 1$ of interactions, and for any certificate size $\sigma\geq 0$, if $\cL\in\AM[k](\sigma,\gamma)$ (respectively, $\cL\in\MA[k](\sigma,\gamma)$) using $\rho(n)$ shared random bits, then $\cL\in\AM[k](\sigma+\log n + \rho,\gamma+\log n + \rho)$ (respectively, $\cL\in\AM[k+1](\sigma+\log n + \rho,\gamma+\log n + \rho)$) with distributed randomness.
\end{theorem}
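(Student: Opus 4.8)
The plan is to simulate shared randomness by distributed randomness using a single node's random coins, broadcast along a spanning tree, and verified by a proof-labeling scheme. First I would observe the key asymmetry between the two parts: for \AM[k], Arthur acts first, so the random coins are generated \emph{before} any certificate is needed, whereas for \MA[k], Merlin speaks first, so there is no prior Arthur move in which to seed the shared randomness — this is precisely why the \MA case costs one extra interaction.

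For the \AM[k] case, I would have each node generate its private random string as usual, but designate the random string $r_1(r)$ of a single distinguished node $r$ (say the root of a spanning tree) to play the role of the shared random string of length $\rho$. The difficulty is that the other nodes do not know $r_1(r)$ and must be convinced of its value. Here I would invoke the standard spanning-tree certification of \cite{KKP10}, already used in Proposition~\ref{prop:boostingproba}: during its first response, Merlin augments each certificate with (i) an $O(\log n)$-bit local encoding of a spanning tree $T$ rooted at $r$, and (ii) the alleged value of the shared random string, which Merlin claims equals $r_1(r)$. This adds $\log n + \rho$ bits to the certificate. In the verification phase \verifier, each node forwards the claimed shared string to its tree-neighbors (adding $\log n + \rho$ bits to the messages), so that every node checks consistency with its parent along $T$; the root $r$ additionally checks that the claimed value is its own genuine $r_1(r)$. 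Since \verifier\ already certifies $T$ and performs the equality checks locally, a single extra round suffices, matching the $\AM[k](\sigma+\log n + \rho,\gamma+\log n + \rho)$ bound. On a legal instance an honest Merlin supplies the true string and all nodes accept with the original probability; on an illegal instance, if Merlin lies about the shared string at some node, the spanning-tree equality chain forces a rejection, and otherwise the nodes are effectively running the original protocol on the genuine $r_1(r)$, so the soundness bound is preserved.

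For the \MA[k] case, the obstacle is that Merlin interacts first, so there is no Arthur move to generate $r_1(r)$ before the first certificate is sent. I would resolve this by prepending one Arthur move: Arthur's new first interaction is used solely to generate the distributed random coins, with node $r$'s coins serving as the shared string; all subsequent Merlin and Arthur moves of the original protocol are shifted by one, and Merlin's first genuine certificate now also carries the spanning-tree encoding and the claimed shared string exactly as above. This converts a $k$-interaction Merlin-Arthur protocol into a $(k+1)$-interaction \emph{Arthur}-Merlin protocol (Arthur now speaks first and last alternation permitting), yielding membership in $\AM[k+1](\sigma+\log n + \rho,\gamma+\log n+\rho)$ as claimed. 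The certificate and message overheads are identical, namely $\log n$ for the tree and $\rho$ for broadcasting the shared string.

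The main obstacle I expect is not the broadcasting mechanism itself — that is the routine spanning-tree argument — but rather the bookkeeping of \emph{which} node's randomness is public to Merlin and at \emph{what point} in the interaction it becomes available. One must verify that Merlin, who sees all random outcomes in both models, can indeed compute the same certificates it would have computed under shared randomness once $r_1(r)$ is fixed as the common string, and that a cheating Merlin gains no advantage by reporting an inconsistent shared value across nodes, since any such inconsistency is caught by the tree-path equality checks. The soundness analysis must also confirm that conditioning on $r$'s coins (rather than a freshly drawn shared string) does not change the acceptance/rejection probabilities, which holds because $r_1(r)$ is itself uniformly distributed of length $\rho$ and independent of the instance.
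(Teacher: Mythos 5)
Your proposal follows essentially the same route as the paper: simulate the shared string by the private random string of one designated node, have Merlin broadcast that string in the certificates together with an $O(\log n)$-bit spanning-tree proof, check consistency of the broadcast value between neighbors, and have the root compare it against its own coins; the $\MA[k]$ case is handled by one additional interaction exactly as in the paper. The one point you must pin down is how the distinguished node $r$ is determined. Writing ``say the root of a spanning tree'' is circular, because the spanning tree is supplied by Merlin: if Merlin may choose the root after seeing all the private strings, then on a no-instance it can select, among the $n$ independent $\rho$-bit strings, one lying in the (measure at most $\nicefrac13$) set of shared strings on which it can cheat --- an event of probability $1-(\nicefrac23)^n$ --- and soundness collapses. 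The paper avoids this by fixing $r$ to be the node with minimum identifier and certifying, within the same $O(\log n)$-bit scheme, that the tree is rooted there; your closing observation that ``$r_1(r)$ is uniformly distributed and independent of the instance'' is valid only once $r$ is fixed independently of the coins. With that correction (and noting that the consistency check can be folded into the single verification round rather than costing an extra one), your argument coincides with the paper's.
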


\begin{proof}
Let $\cL\in\AM[k](\sigma,\gamma)$ with an interactive Arthur-Merlin protocol $\P$ using $\rho(n)$ shared random bits. If nodes have only access to distributed random coins, then they can simulate $\P$ as follows. At every interaction with Merlin, every node sends a random string of $\rho(n)$ bits to Merlin, and Merlin is bound to choose the one generated by the node with minimum identifier. To prove that Merlin does so, it provides every node $v$ with a distributed certificate $c(v)$ for proving a spanning tree $T$ of the network, rooted at the node with minimum identifier. As we already mentioned previously in the paper, it is known that certificates of $O(\log n)$ bits suffice for certifying such a tree~\cite{KKP10}. All nodes are also provided with the random string $r$ produced by the node with smaller identifier, which consume $\rho(n)$ bits of certificates. All nodes check that they are given the same random bits, and the root of the tree checks that these bits are those it gave to Merlin.  

For the case of $\cL\in\MA[k](\sigma,\gamma)$, the proof is identical, except for the last stage of the interactive Merlin-Arthur protocol $\P$ verifying $\cL$, which involves a distributed verification algorithm using shared randomness. This latter algorithm can be simulated in the distributed randomness setting by adding another interaction with Merlin, which enables to certify the random string generated by the node with minimum identifier, to be used instead of the shared random string originally used by the distributed verification algorithm.
\end{proof}

\subsection{Lower Bounds for Shared Randomness}

For many verification problems, the number of random bits used by Arthur remains limited, typically $\rho(n)=O(\log n)$, which shows that, often, shared randomness does not add much power to Arthur-Merlin protocols.
The next result states a lower bound on the certificate and message size in the case of $\sym$ and $\nsym$, even when using shared randomness.

\begin{theorem}
\label{thm:lowerboundsymmetry}
Any Arthur-Merlin protocol for (non) symmetry must have certificate and message size $\Omega(\log\log n)$. In short, $\sym,\nsym\not\in\AM(o(\log\log n), \infty) \cup\AM(\infty, o(\log\log n))$,
even using shared randomness.
\end{theorem}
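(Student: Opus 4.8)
The plan is to prove the lower bound via a reduction from a two-party communication complexity problem, exploiting the fact that $\AM$ protocols in the distributed setting can be simulated by bounded-round Arthur-Merlin communication protocols between two parties. The key idea is to construct a family of graph instances in which deciding $\sym$ (or $\nsym$) encodes an instance of \emph{equality} (or \emph{non-equality}) between two strings held by two ``halves'' of the graph. Recall from the introduction that the paper relies on recent lower bounds for equality and non-equality in the Arthur-Merlin communication setting~\cite{GPW16}; the goal is to bring these to bear.

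First I would fix a ``hard'' graph gadget: take a graph consisting of two large parts $A$ and $B$ connected by a narrow cut (say a single edge or a small number of edges), where the local structure of $A$ encodes a string $a$ and the local structure of $B$ encodes a string $b$, each of length roughly $N=\mathrm{poly}(n)$. I would arrange the gadget so that $G$ has a non-trivial automorphism (is symmetric) if and only if $a=b$ (for the $\sym$ case), and symmetrically so that $\nsym$ corresponds to $a\neq b$. The classical constructions used to prove the $\Omega(n^2)$ lower bound for $\sym$ with locally-checkable proofs (cf.~\cite{GS16}) already encode such equality-type structure, so a scaled-down version should suffice here.

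Next I would argue the simulation: given a distributed $\AM$ protocol for $\sym$ with certificate size $\sigma$ and message size $\gamma$, both $o(\log\log n)$, I would show how the two players Alice (holding $A$, hence $a$) and Bob (holding $B$, hence $b$) can run an Arthur-Merlin communication protocol. The crucial observation is that only the information crossing the cut between $A$ and $B$ needs to be communicated: the random string (shared, of size $\rho=O(\log n)$, or simulated), the certificates on the cut nodes (of size $\sigma$ each), and the verifier messages across the cut edges (of size $\gamma$ each). Since the cut is narrow, the total communication is $O(\sigma+\gamma+\rho)$ per the relevant nodes, and the soundness/completeness guarantees of the distributed protocol translate into the corresponding guarantees of the two-party Arthur-Merlin protocol. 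This yields an Arthur-Merlin communication protocol for equality (resp.\ non-equality) with advice/message complexity tied to $\sigma$ and $\gamma$.

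The main obstacle I anticipate is twofold. First, controlling the cut: I must ensure the gadget uses only a constant (or sub-logarithmic) number of cut edges while still forcing the automorphism-versus-equality correspondence, since each cut edge contributes $\gamma$ bits and each cut node contributes $\sigma$ bits to the simulated communication, and a wide cut would weaken the bound. Second, matching the exact $\Omega(\log\log n)$ threshold: the communication lower bound of~\cite{GPW16} for equality/non-equality in the Arthur-Merlin model must be invoked with the right parameters so that a distributed protocol with $\sigma,\gamma=o(\log\log n)$ (even with $O(\log n)$ shared random bits) would contradict it. Concretely, the known lower bound says any such Arthur-Merlin communication protocol for (non-)equality on $N$-bit inputs needs $\Omega(\log N)=\Omega(\log n)$ total resources split appropriately, and I would need to verify that the simulation forces either $\sigma$ or $\gamma$ to be $\Omega(\log\log n)$. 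Balancing the gadget size $N$ against $n$ (so that $\log N$ translates to $\log\log n$ after accounting for how the string is spread over the $\Theta(n)$ nodes) is the delicate quantitative step, and it is where I expect the bulk of the careful bookkeeping to lie.
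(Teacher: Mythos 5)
Your overall strategy is exactly the paper's: build a graph from two halves joined by a single cut edge so that $\sym$ encodes $\equ$ (and $\nsym$ encodes $\nequ$), simulate the distributed protocol as a two-party Arthur--Merlin communication protocol in which Alice and Bob each hold one half, and invoke the $\Omega(\log\log n)$ lower bound of~\cite{GPW16}. The paper's gadget has $6n+2$ nodes encoding $n$-bit strings $x,y$ at the neighborhoods of two adjacent nodes $a$ and $b$, padded with cliques and cycles to force any non-trivial automorphism to swap the two halves.

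There are, however, two accounting issues in your sketch that would prevent it from delivering the theorem as stated. First, you count the shared random string $\rho=O(\log n)$ as part of the information crossing the cut. In the two-party model of~\cite{GPW16} the randomness is public to Alice, Bob, and Merlin, and the communication cost is \emph{only} the length of Merlin's certificate; if $\rho$ were charged as communication the simulation would cost $\Omega(\log n)$ outright and prove nothing. Second, and more importantly, you bundle the cost as $O(\sigma+\gamma+\rho)$, which at best rules out protocols where $\sigma$ \emph{and} $\gamma$ are both small; the theorem claims $\sym,\nsym\not\in\AM(o(\log\log n),\infty)$ \emph{and} $\not\in\AM(\infty,o(\log\log n))$, i.e., each parameter must be lower-bounded with the other unbounded. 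The paper handles this with two separate reductions. For the certificate bound, Merlin sends only the two certificates $\pi_a,\pi_b$ of the cut endpoints ($2\sigma$ bits); Alice can then locally compute the message $b$ would send to $a$ (it depends only on $b$'s identity, input, randomness, and certificate, all of which she now knows), and she enumerates over all certificate assignments to her own subgraph using unbounded local computation --- so $\gamma$ never enters the communication. For the message bound, Merlin instead sends only the transcript on the single edge $(a,b)$ ($2\gamma$ bits), and Alice and Bob enumerate certificates locally --- so $\sigma$ never enters. Finally, the quantitative ``balancing'' you worry about is a non-issue once the bound from~\cite{GPW16} is stated correctly: it is $\Omega(\log\log N)$ for $N$-bit $\equ$/$\nequ$ (not $\Omega(\log N)$), so taking $N=\Theta(n)$ transfers directly.
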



\begin{proof}
In~\cite{GPW16}, a communication complexity variant of Arthur-Merlin protocols has been proposed. In this variant, Arthur consists of two parties, Alice and Bob, and the input is split between them: Alice holds $x$, Bob holds $y$, and they wish to decide whether the value of a specified function $f$ with input $x$ and $y$ is equal to 1. At the beginning, Alice and Bob start by tossing some coins, then Merlin publishes a certificate, and finally Alice and Bob separately decide whether to accept (the acceptance/rejection criteria are the same as for the Arthur-Merlin protocols). The communication cost of the protocol is defined as the worst-case length of Merlin’s certificates. At the end of the paper, the authors observe that, with respect to this variant of Arthur-Merlin protocols, any such protocol for $\equ$ and for $\nequ$ must have communication complexity $\Omega(\log\log n)$. We will now show that the existence of a \AM\ protocol with one interaction for the $\sym$ (respectively, $\nsym$) problem with certificate size $o(\log\log n)$ would imply a two-player Arthur-Merlin protocol for $\equ$ (respectively, $\nequ$) with communication complexity $o(\log\log n)$. The theorem thus follows.

Given two binary vectors $x=(x_1,\ldots,x_n)$ and $y=(y_1,\ldots,y_n)$, recall that $\equ(x,y)=1$ if and only if $x_i=y_i$ for every $i$ with $1\leq i\leq n$ (for the sake of simplicity, we will assume that $x\neq \mathbf{0}$ and that $y\neq \mathbf{0}$, but our construction can be also adapted to the case in which $x= \mathbf{0}$ or $y= \mathbf{0}$). We now define a graph $G_{x,y}$ such that $\sym(G_{x,y})=1$ if and only if $\equ(x,y)=1$ (and, hence, $\nsym(G_{x,y})=1$ if and only if $\nequ(x,y)=1$). The graph includes $6n+2$ nodes $a$, $b$, $a_i$, $b_i$, $u_i$, $v_i$, $y_i$, and $z_i$, for $1\leq i\leq n$, and the following edges (see Fig.~\ref{fig:lowerboundsymmetry}):

\begin{itemize}
\item $(a,b)$, $(a,a_i)$, for $1\leq i\leq n$ such that $x_i=1$, and $(b,b_i)$, for $1\leq i\leq n$ such that $y_i=1$;

\item $(u_i,a_j)$ and $(v_i,b_j)$, for $1\leq i\leq j\leq n$;

\item $(u_i,u_j)$ and $(v_i,v_j)$, for $1\leq i < j \leq n$, and $(u_i,y_j)$ and $(v_i,z_j)$, for $1\leq i, j \leq n$;

\item $(y_i,y_{i+1})$ and $(z_i,z_{i+1})$, for $1\leq i < n$, and $(y_1,y_n)$ and $(z_1,z_n)$.
\end{itemize}
Clearly, if $x=y$, then $\sym(G_{x,y})=1$: indeed, we can simply map each $a$-node (respectively, $u$-node and $y$-node) to the corresponding $b$-node (respectively, $v$-node and $z$-node). On the other hand, because of the degree distribution of its nodes, any non-trivial automorphism of $G_{x,y}$ has to map the $u$-nodes to the corresponding $v$-nodes: this in turn implies that, because of their neighborhoods, each $a$ node has to be mapped to the corresponding $b$-node. Hence, since the mapping is an automorphism, the neighborhood of node $a$ and node $b$ has to be the same: that is, $x=y$.

\begin{figure}[htb]
\begin{tikzpicture}[scale=0.425,every node/.style={scale=0.425,minimum width=1.3cm}]
		\foreach \xsc/\xsh/\nodea/\nodeu/\nodey in {1/0/a/u/y,-1/-11/b/v/z} {
			\begin{scope}[xscale=\xsc,xshift=\xsh cm]
				\node[circle,draw] (\nodea) at (4,-3.5) {\LARGE{$\nodea$}};
				\foreach \x/\y/\z in {1/0/1,2/-2/2,3/-5/{n-1},4/-7/n} {
					\node[circle,draw] (\nodea\x) at (0,\y) {\LARGE{$\nodea_{\z}$}};
					\node[circle,draw] (\nodeu\x) at (-4,\y) {\LARGE{$\nodeu_{\z}$}};
					\node[circle,draw] (\nodey\x) at (-8,\y) {\LARGE{$\nodey_{\z}$}};
				}
				\node (\nodea5) at (0,-3.5) {$\vdots$};
				\node (\nodeu5) at (-4,-3.5) {$\vdots$};
				\node (\nodey5) at (-8,-3.5) {$\vdots$};
				\foreach \x/\y  in {1/1,1/2,1/3,1/4,2/2,2/3,2/4,3/3,3/4,4/4} {
					\draw (\nodeu\x) -- (\nodea\y);
				}
				\foreach \x/\y  in {1/2,2/5,5/3,3/4} {
					\draw (\nodeu\x) -- (\nodeu\y);
					\draw (\nodey\x) -- (\nodey\y);
				}
				\foreach \x  in {1,2,3,4} {
					\foreach \y  in {1,2,3,4} {
						\draw (\nodeu\x) -- (\nodey\y);
					}
				}
				\draw (\nodey1) to[bend right=90] (\nodey4);
				\foreach \x/\y  in {1/2,1/3,1/4,2/3,2/4,3/4} {
					\draw (\nodeu\x) to[bend right=90] (\nodeu\y);
				}
			\end{scope}
		}
		\draw (a) -- (a1);
		\draw (a) -- (a3);
		\draw (b) -- (b2);
		\draw (b) -- (b3);
		\draw (b) -- (b4);
		\draw (a) -- (b);
	\end{tikzpicture}
\caption{The graph used to reduce $\equ$ (respectively, $\nequ$) to $\sym$ (respectively, $\nsym$): in this case, $x_2=x_n=y_1=0$}
\label{fig:lowerboundsymmetry}
\end{figure}

Let us now suppose that there exists a \AM\ protocol $\P$ with one interaction for the $\sym$ (respectively, $\nsym$) problem which uses certificates of size $o(\log\log n)$. We now show how $\P$ can be used to design an Arthur-Merlin protocol for $\equ$ (respectively, $\nequ$) with communication complexity $o(\log\log n)$ (for the sake of brevity, we will show this statement for $\sym$ and $\equ$: the proof for $\nsym$ and $\nequ$ is almost identical). Given $x$ (respectively, $y$), Alice (respectively, Bob) can construct the $(a,u,y)$-subgraph (respectively, $(b,v,z)$-subgraph) of $G_{x,y}$: let $G_{x,y}^A$ (respectively, $G_{x,y}^B$) denote such subgraph. After having sent to Merlin the shared random string $r$, Alice and Bob waits for Merlin's certificate which is supposed to be formed by the two certificates $\pi_a$ and $\pi_b$ that nodes $a$ and $b$ would have received during the execution of $\P$ with random string $r$. By simulating $\P$ for every possible certificate assignment to the nodes of $G_{x,y}^A$ (respectively, $G_{x,y}^B$), Alice (respectively, Bob) can verify whether there exists an assignment that makes all the nodes of its corresponding subgraph accept: if this is the case, Alice (respectively, Bob) accepts. By definition, we have that if $x=y$, then, for any random string $r$ there exist a certificate assignment to $G_{x,y}^A$ (respectively, $G_{x,y}^B$) and a certificate for Alice and Bob which make Alice and Bob accept. On the other hand, if $x\neq y$, then, for at least \nicefrac{2}{3} of all possible random strings, any certificate assignment to $G_{x,y}^A$ (respectively, $G_{x,y}^B$) and any certificate for Alice and Bob makes Alice and Bob reject. Since the size of the certificate for Alice and Bob is twice the size of the certificate size of $\P$, this implies that this protocol is an Arthur-Merlin protocol for $\equ$ with communication complexity $o(\log\log n)$. This contradicts the lower bound observed in~\cite{GPW16}: we have thus proved that $\sym,\nsym\not\in\AM(o(\log\log n), \infty)$.

The above proof can be adapted in order to obtain a lower bound on the communication complexity of any \AM\ for the $\sym$ and $\nsym$ problems. Indeed, instead of asking Merlin for the certificates of the nodes $a$ and $b$, Alice and Bob ask Merlin for the message transmitted on the edge $(a,b)$. They then try to find a certificate assignment that suits this message. Hence, we have also shown that $\sym,\nsym\not\in\AM(\infty, o(\log\log n))$ and the theorem follows.
\end{proof}

A result similar to previous theorem was proved in~\cite{KOS18} in an ad-hoc manner, but only for the $\sym$ problem and with respect to space complexity. The authors have recently reported to improve the lower bound from $\log \log n$ to $\log n$~\cite{NPY18}.

\section{Interactions vs. Space and Communication }

In this section, we explore the power given to interactive protocols by allowing many interactions between Merlin and Arthur, in terms of both space complexity and communication complexity. 

\subsection{Reducing the Number of Interactions}
The following general result allows us to reduce the number of interactions between Arthur and Merlin, at the cost of increasing the certificate size and the communication cost of the protocol.

\begin{theorem}
\label{thm:reduction}
 For any two functions $\sigma$ and $\gamma$, $\MAM(\sigma,\gamma) \subseteq \AM(n\sigma^2,n\sigma \gamma)$.
 \end{theorem}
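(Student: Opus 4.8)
The plan is to adapt the classical round-reduction behind $\mathsf{MA}\subseteq\mathsf{AM}$ in the centralized setting: let Arthur send his random string \emph{first} and then have Merlin reveal, in a single message, both of the certificates it would have produced in the $\MAM$ protocol. The danger is exactly the one this swap always creates. In $\MAM$ Merlin commits to its first certificate before seeing Arthur's randomness, whereas after the swap it would see the randomness first. I would neutralize this by amplification combined with a union bound over all possible first certificates. Writing $c_0=(c_0(w))_{w\in V}$ for the first-round certificate assignment of the base $\MAM$ protocol (which has $n\sigma$ bits in total), and $c_1^{(j)}$, $r^{(j)}$ for the second certificate and the random string of the $j$-th copy, I would have Arthur draw $t=\Theta(n\sigma)$ independent copies $r^{(1)},\dots,r^{(t)}$ of the base randomness, and have Merlin answer, at each node $v$, with its own first certificate $c_0(v)$, the $t$ second certificates $c_1^{(1)}(v),\dots,c_1^{(t)}(v)$, and a set $A\subseteq[t]$.

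The verification phase would run in a single round. Each node $v$ checks that $A$ agrees with the sets held by its neighbors, so that by connectivity a single global $A$ is fixed, and that $|A|\ge t/2$. For every $j\in A$, node $v$ then runs the base $\MAM$ verifier $\verifier$ for copy $j$, using $c_0(v)$, $r^{(j)}(v)$ and $c_1^{(j)}(v)$ and exchanging the corresponding $\gamma$-bit message with each neighbor, and accepts only if $\verifier$ accepts at $v$ in every copy $j\in A$. Hence all nodes accept if and only if a common set $A$ of at least $t/2$ copies is declared, each of which is accepted by \emph{every} node --- that is, if and only if at least $t/2$ of the copies \emph{globally} accept under one global first certificate. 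The role of broadcasting and cross-checking $A$ is precisely to turn the per-copy global event ``all nodes accept'' into something a single node can certify; this lets the count be carried out implicitly (the number $|A|$ is public) and keeps $\verifier$ to one round, with no spanning-tree aggregation needed.

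For completeness I would fix the honest first certificate $c_0^\ast$ of the base protocol: each copy globally accepts with probability $\ge\nicefrac23$ independently, so a Chernoff bound gives at least $t/2$ accepting copies with probability $\ge\nicefrac23$, and Merlin takes $A$ to be those copies. For soundness, fix a rejected configuration $\cI\notin\cL$ and a \emph{single} global $c_0$: by base soundness each copy is globally accepting (for some choice of $c_1^{(j)}$) with probability $\le\nicefrac13$, independently across copies, so a Chernoff bound gives probability $e^{-\Omega(t)}$ that $t/2$ of them accept, and a union bound over the $2^{n\sigma}$ possible global $c_0$ is absorbed once $t=\Theta(n\sigma)$. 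This union bound is the heart of the argument and the step I expect to be most delicate: it must range over the \emph{full} global first certificate rather than a per-node quantity (each node only holds $c_0(v)$, yet the collection of these values is the object Merlin is implicitly committing to), which is exactly what forces $t=\Theta(n\sigma)$ and hence certificates of $O(n\sigma^2)$ bits ($t$ second certificates of $\sigma$ bits each, plus the $O(n\sigma)$ bits of $A$) and messages of $O(n\sigma\gamma)$ bits ($t$ copies of the base $\gamma$-bit messages), matching the claimed $\AM(n\sigma^2,n\sigma\gamma)$. If convenient, I would first apply Proposition~\ref{prop:boostingproba} to push the base completeness and soundness to constants comfortably bounded away from $\nicefrac12$, at an additive cost of $O(\log n)$ that is absorbed into $\sigma$ and $\gamma$; this only widens the Chernoff separation and is not essential.
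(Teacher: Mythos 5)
Your core strategy is exactly the paper's: swap the order of the first two rounds, run $t=\Theta(n\sigma)$ independent parallel copies of the $\MAM$ protocol's Arthur--Merlin exchange, and defeat Merlin's new ability to choose the first certificate \emph{after} seeing the randomness by a union bound over all $2^{n\sigma}$ global first-certificate assignments $c_0:V\to\{0,1\}^\sigma$ --- the paper likewise bounds $\bigl|\bigcup_{y_1^I}\BAD^I(y_1^I)\bigr|\leq 2^{n\sigma}\cdot 2^{k\rho}/3^k$ and sets $k=n\sigma$, arriving at the same certificate and message bounds. The one genuine divergence is in how the parallel repetitions are aggregated. The paper assumes the base protocol is \emph{one-sided} (perfect completeness) and simply requires every node to accept in \emph{all} $k$ copies; this makes completeness trivial and needs no extra machinery, but as written it only covers $\MAM$ protocols with completeness~$1$, whereas the class is defined with completeness~$\nicefrac23$. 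You instead take a majority vote over the copies, which handles two-sided error, and you correctly identify the distributed obstacle this creates --- no single node can count global acceptances in one round --- and resolve it by having Merlin certify a global set $A$ of allegedly accepting copies, cross-checked between neighbors for consistency. Your soundness accounting is right: for each fixed global $c_0$ the events ``copy $j$ is globally acceptable'' are independent with probability at most $\nicefrac13$ each, Chernoff gives $e^{-\Omega(t)}$ for a majority, and the $2^{n\sigma}$ union bound is absorbed by $t=\Theta(n\sigma)$; the set $A$ needs no separate union bound since it is determined by $c_0$ and the randomness. In short, your route buys generality (no perfect-completeness assumption) at the price of the $A$-gadget, while the paper's buys simplicity at the price of an unstated one-sidedness hypothesis; the asymptotic bounds are identical.
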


\begin{proof}
Let $\P$ be a $\MAM(\sigma,\gamma)$ 1-sided protocol for a language $\cL$ using $\rho(n)$ shared random bits. Given a configuration $I=(G,\id,x)$, let $y^I_1 : V \rightarrow \Sigma^\sigma$ be the function specifying the certificate sent by Merlin to each node during the first iteration. Moreover, for any string $r\in \Sigma^{\rho(n)}$, let $y^{I,r}_2 : V \rightarrow \Sigma^\sigma$ be the function specifying the certificate sent by Merlin to each node during the second iteration, whenever $r$ is the shared randomly chosen string by Arthur. From the definition, it follows that, if $I\not\in \cL$, then $\Pr[\P \; \mbox{accepts at all nodes}]\leq \nicefrac13$, that is, for at most $\frac{2^{\rho}}{3}$ random strings $r$ there exists a function $y^{I,r}_2$ which makes the protocol $\P$ accept. By repeating $k$ times the protocol $\P$ on the configuration $I$, we then have that, for at most $\frac{2^{k\rho}}{3^k}$ $k$-tuples of random strings $r_1,\ldots,r_k$, there exists a $k$-tuple of functions $y^{I,r_1}_2,\ldots,y^{I,r_k}_2$ which makes the protocol $\P$ accept at each repetition. Let $\BAD^I(y^I_1)$ denote the set of such $k$-tuples of random strings: we have that $|\BAD^I(y^I_1)|\leq \frac{2^{k\rho}}{3^k}$.

We can now define a $\AM$ protocol $\P'$ in the following way. Arthur chooses $k$ shared random strings $r_1,\ldots,r_k$ and sends $R=r_1\cdots r_k$ to Merlin. Merlin sends to each node a label formed by $k+1$ strings in $\Sigma^\sigma$ chosen according to $k+1$ functions $y^{I,R}_1,y^{I,R}_{2,1},\ldots,y^{I,R}_{2,k} : V \rightarrow \Sigma^{\sigma}$. At this point, each node $u$ executes $k$ repetitions of protocol $\P$, assuming that the first certificate received by Merlin is $y^{I,R}_1(u)$, and that the second certificate received from Merlin at the $i$-th repetition is $y^{I,R}_{2,i}(u)$: $u$ accepts if and only if all repetitions accept. If $I\in \cL$, then all nodes will accept with probability one, since $\P$ is a 1-sided protocol. On the other hand, if $I\not\in \cL$, then the probability that $\P'$ accepts can be bounded by referring to the cardinality of $\bigcup_{y^I_1}\BAD^I(y^I_1)$, where the union is over all possible functions $y^I_1 : V \rightarrow \Sigma^\sigma$: indeed, given the function $y^{I,R}_1=y^I_1$, this cardinality is an upper bound on the number of random strings $R=r_1\cdots r_k$ such that there exist $k$ functions $y^{I,R}_{2,1},\ldots,y^{I,R}_{2,k}$ which allows Merlin to make all nodes accept.
 
 From the bound above and from the fact that we have exactly $2^{n\sigma}$ functions mapping $n$ nodes to strings of length $\sigma$, we have that
 
 \[
 |\bigcup_{y^I_1}\BAD^I(y^I_1)|\leq \sum_{y^I_1}\left|\BAD^I(y^I_1)\right|\leq \sum_{y^I_1}\frac{2^{k\rho}}{3^k}=2^{n\sigma}\frac{2^{k\rho}}{3^k}.
 \]
 Hence,
 \[
 \Pr[\P \; \mbox{accepts at all nodes}]\leq \frac{2^{n\sigma}\frac{2^{k\rho}}{3^k}}{2^{k\rho}} = \frac{2^{n\sigma}}{3^k}.
 \]
 If we choose $k = n\sigma$, the above probability is at most $\nicefrac{1}{3}$. On the other hand, with $k = n\sigma$ we have that $\P'$ uses certificates of size $(n\sigma+1)\sigma$ and has a communication complexity equal to $n\sigma\gamma$ (since it has to simulate $n\sigma$ repetitions of protocol $\P$ which has communication cost equal to $\gamma$). The theorem is thus proved. 
\end{proof}
 
 \begin{corollary}\label{cor:reduction}
 For any two functions $\sigma$ and $\gamma$, $\AMAM(\sigma,\gamma) \subseteq \AM(n\sigma^2,n\sigma \gamma)$.
 \end{corollary}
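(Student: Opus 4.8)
The plan is to reduce $\AMAM(\sigma,\gamma)$ to $\MAM(\sigma',\gamma')$ for suitable parameters, and then apply Theorem~\ref{thm:reduction}. The key observation is that an $\AMAM$ protocol has four interactions, beginning with Arthur sending a random string, whereas a $\MAM$ protocol begins with Merlin. The natural idea is to \emph{absorb} the leading Arthur interaction into Merlin's first message, at the cost of forcing Merlin to commit, in advance, to certificates that work for all of Arthur's random choices in that first round.

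More concretely, recall that an $\AMAM$ protocol proceeds as: Arthur sends random string $r_1$, Merlin replies with a certificate $c_1$, Arthur sends random string $r_2$, and Merlin replies with a certificate $c_2$, followed by deterministic verification. I would like to view this as a $\MAM$ protocol where Merlin's \emph{first} message encodes the function $r_1 \mapsto c_1$, i.e.\ Merlin commits to the entire map from the first random string to his first response. The subsequent $r_2$-interaction and Merlin's second response $c_2$ then play the roles of Arthur's query and Merlin's reply in a $\MAM$ protocol. The first step is therefore to make precise that $\AMAM(\sigma,\gamma) \subseteq \MAM(\sigma',\gamma')$ where $\sigma'$ and $\gamma'$ account for the blowup incurred by having Merlin's initial certificate describe the whole response function rather than a single response; if the first random string $r_1$ uses $\rho$ bits, this blowup is a factor of roughly $2^{\rho}$ in the certificate size, which is only manageable when $\rho = O(\log n)$.

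The main obstacle is exactly this dependence on $\rho$. If Arthur's first random string is long, committing to the full function $r_1 \mapsto c_1$ is prohibitively expensive, so the clean containment $\AMAM \subseteq \MAM$ does not hold for arbitrary randomness. The way around this, I expect, is to exploit that in the statement the bounds on the final $\AM$ protocol are polynomial in $n$, $\sigma$, and $\gamma$ only, with no dependence on the randomness length. The natural fix is to invoke Proposition~\ref{prop:boostingproba} and the observation (stated in the paper) that for these local protocols the amount of shared randomness needed is $O(\log n)$ bits; with $\rho = O(\log n)$, the first Arthur round has only polynomially many possible outcomes, so Merlin can afford to send certificates indexed by $r_1$, incurring at most a polynomial blowup that gets folded into the $n\sigma^2$ and $n\sigma\gamma$ bounds.

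Having established $\AMAM(\sigma,\gamma) \subseteq \MAM(\widetilde{O}(\sigma),\widetilde{O}(\gamma))$ under the $O(\log n)$-randomness assumption, I would finish by applying Theorem~\ref{thm:reduction} to the resulting $\MAM$ protocol, which yields containment in $\AM(n\sigma^2, n\sigma\gamma)$ up to the logarithmic factors absorbed by the soundness-boosting of Proposition~\ref{prop:boostingproba}. The remaining care is bookkeeping: verifying that the extra interaction being collapsed interacts correctly with the one-sidedness assumption that Theorem~\ref{thm:reduction} requires, and checking that the parameter substitution genuinely produces the advertised $O(n\sigma^2)$-bit certificates and $O(n\sigma\gamma)$-bit messages rather than something larger. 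I expect the one-sidedness and the $O(\log n)$-randomness reductions (both available from earlier results in the paper) to be precisely what make the final parameters come out clean.
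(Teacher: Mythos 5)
There is a genuine gap, and it sits exactly where you flagged the ``main obstacle.'' Your fix for that obstacle does not work. The paper has no randomness-reduction theorem: the remark that ``typically $\rho(n)=O(\log n)$'' is an empirical observation about the specific protocols constructed in the paper, not a statement that an arbitrary $\AMAM(\sigma,\gamma)$ protocol can be assumed to use $O(\log n)$ shared random bits in its first Arthur round. (A Newman-style argument is not available off the shelf here either, since the instances range over all labeled $n$-node configurations with unbounded inputs.) So the tabulation step, which multiplies the first certificate by $2^{\rho}$, is unbounded in general. Worse, even if you grant $\rho=O(\log n)$, the parameters do not close: the tabulated $\MAM$ protocol has first certificates of size $\sigma'=2^{\rho}\sigma=n^{\Theta(1)}\sigma$ per node, and the union bound in the proof of Theorem~\ref{thm:reduction} ranges over all $2^{n\sigma'}$ first-certificate assignments, forcing $k=n\sigma'$ repetitions and an output of $\AM(n\sigma'^2,\,n\sigma'\gamma)=\AM(n^{\Theta(1)}\sigma^2,\,n^{\Theta(1)}\sigma\gamma)$. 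That polynomial factor is not ``folded into'' $n\sigma^2$ and $n\sigma\gamma$; the claimed bounds are strictly smaller.

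The route that gives the stated parameters does not absorb the leading $\mathsf{A}$ into Merlin at all: it keeps $r_0$ as an Arthur move and merges it with the $k$-fold repetition string $R=r_1\cdots r_k$ into the single Arthur round of the new $\AM$ protocol. Merlin then answers with one first certificate $y_1^{I,r_0}$ (still $\sigma$ bits per node, so the union bound stays at $2^{n\sigma}$ and $k=n\sigma$ suffices) plus $k$ second certificates, exactly as in the proof of Theorem~\ref{thm:reduction}. The one new ingredient this requires is handling the fact that the soundness error of the residual $\MAM$ protocol \emph{conditioned on $r_0$} need not be $\leq\nicefrac13$ for every $r_0$: one first boosts the $\AMAM$ protocol to constant soundness error $\epsilon$ via Proposition~\ref{prop:boostingproba} (an $O(\log n)$ additive overhead), and then a Markov argument over $r_0$ shows that on all but an $O(\sqrt{\epsilon})$ fraction of first-round strings the conditional error is small enough for the $2^{n\sigma}/3^{k}$ counting argument to go through. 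Your proposal correctly identifies the need for one-sidedness and for error boosting, but the tabulation idea should be discarded in favor of conditioning on $r_0$.
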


As a direct application of Theorem~\ref{thm:reduction}, we have that $\sym$ admits a \AM\ protocol with one interaction and certificate size $O(n\log^2 n)$. This is a consequence of the theorem and of the existence of a \AM\ protocol with two interactions~\cite{KOS18}. It is worth noting that this consequence of our general reduction result is only a $\log$ factor away from the ``ad-hoc'' result of~\cite{KOS18} which establishes the existence of a \AM\ protocol with one interaction and certificate size $O(n\log n)$.
Corollary~\ref{cor:reduction} can be applied to a more recent result\cite{NPY18}, which establishes the existence of a \AM\ protocol with three interactions and certificate size $O(\log n)$ for the non-isomorphism graph problem, in the case in which the nodes can communicate on both graphs. As a consequence of the corollary, we have that this problem admits a \AM\ protocol with one interaction and certificate size $O(n\log^2 n)$. 
As far as we know, this is the first \AM\ protocol with one interaction for this version of the non-isomorphism graph problem An interesting open question is whether such a protocol can exist also for the problem in which nodes can communicate only on one graph, while the other graph is locally given as input to the nodes themselves. For this latter problem, a \AM\ protocol with one interaction and certificate size $O(n\log n)$ was given~\cite{KOS18}, as well as an \AM\ protocol with a constant number of interactions and certificate size $O(\log n)$~\cite{NPY18}. Observe that the two applications of Theorem~\ref{thm:reduction} and of Corollary~\ref{cor:reduction} are obtained at the cost of an increase of the communication complexity by a factor $\widetilde{O}(n)$. We do not know if this linear increase of communication complexity can be avoided in general. 

\subsection{The Arthur-Merlin Hierarchy}

We analyze the power of the Arthur-Merlin hierarchy. Recall that, for any $\sigma\geq 0$ and $\gamma\geq 0$, $\AMH(\sigma,\gamma)=\bigcup_{k\geq 0}\AM[k](\sigma,\gamma)$.
We show that increasing the number of interactions cannot help much for reducing the certificate size to $o(n)$, even for languages defined on a very simple subclass of regular graphs, with 1-bit inputs, and admitting a locally checkable proof with $O(n)$-bit certificates.   

\begin{theorem}\label{thm:outsideAM}
There exists a distributed language $\cL$ on cycles, with 1-bit inputs, admitting a locally checkable proof with $O(n)$-bit certificates, and $O(n)$-bit messages, that is outside the Arthur-Merlin hierarchy with $o(n)$-bit certificates, even with messages of unbounded size, and even if the verifier performs an arbitrarily large constant number of rounds, whenever Arthur generates $\rho(n)=o(n)$ random bits at each node for each interaction with Merlin. In short, there exists a distributed language $\cL$ on regular graphs satisfying 
$
\cL\in\Sigma_1\LD(O(n),O(n))\setminus \AMH(o(n),\infty).
$
\end{theorem}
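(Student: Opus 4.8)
The goal is to exhibit a language $\cL$ that is easy for proof-labeling schemes (hence sits in $\Sigma_1\LD(O(n),O(n))$) but resists every Arthur-Merlin protocol with sublinear certificates and sublinear randomness. The plan is to design $\cL$ around a counting/encoding task on cycles whose correct certification inherently requires each node to ``know'' a global object of $\Omega(n)$ bits, yet where the object is verifiable locally once handed out. Concretely, I would let the $1$-bit inputs $x:V\to\{0,1\}$ along the cycle encode a string, and define $\cL$ to consist of configurations where this string belongs to some carefully chosen hard set $H\subseteq\{0,1\}^n$ --- for instance, a set with a counting condition (the number of $1$'s, read around the cycle from the minimum-identifier node, matches a prescribed value, or more robustly, the input string equals a string that is itself the binary encoding of something self-referential). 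The membership $H$ should be such that (i)~a prover can distribute an $O(n)$-bit certificate letting each node locally verify membership via a spanning-path/counting argument as in Lemma~\ref{lem:distance-2coloring} and the tree-counting technique of~\cite{KKP10}, giving $\cL\in\Sigma_1\LD(O(n),O(n))$; but (ii)~no short Arthur-Merlin interaction can convince the nodes.

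For the lower bound, the plan is a counting/incompressibility argument exploiting that Arthur generates only $\rho(n)=o(n)$ random bits per node per interaction and Merlin sends only $o(n)$-bit certificates. First I would fix the structure: on a cycle, the global transcript at each node together with the messages it receives in the (constantly many) verification rounds must, for legal instances, be producible by \emph{some} prover strategy, while for illegal instances \emph{no} prover strategy fools the nodes with probability $\geq\nicefrac23$. The key step is to bound the total amount of information that can flow into the acceptance decision. With $k$ interactions, $o(n)$-bit certificates per interaction, and $\rho(n)=o(n)$ random bits per node per interaction, the number of distinct ``views'' that the collection of nodes can collectively distinguish is too small: the set of accepting configurations reachable by the protocol has cardinality bounded by a quantity of the form $2^{o(n)}$ per random seed, whereas $|H|$ can be taken to be $2^{\Omega(n)}$ and moreover $H$ can be chosen so that the accept/reject structure must separate exponentially many strings. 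I would formalize this by a fooling-set style argument: choose two inputs $w\in H$ and $w'\notin H$ that the protocol cannot tell apart, because Merlin can ``stitch together'' the accepting certificate/message pattern that works for $w$ and apply it to $w'$, and the nodes, seeing only $o(n)$ bits of certificate and $o(n)$ random bits, cannot detect the substitution with probability $\geq\nicefrac23$.

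The cleanest way to make this rigorous is to reduce from a hard two-party or multiparty communication problem, mirroring the strategy already used for Theorem~\ref{thm:lowerboundsymmetry}: split the cycle into two arcs, give each half to one player, and observe that an Arthur-Merlin protocol for $\cL$ with $o(n)$ certificates and $o(n)$ random bits yields an Arthur-Merlin communication protocol whose cost is $o(n)$ on a problem (essentially $\equ$ or a disjointness/indexing variant on $n$-bit strings split across the cut) known to require $\Omega(n)$ bits even with shared randomness and even with unbounded rounds of the Arthur-Merlin type. The simulation is the standard one: each player enumerates all certificate assignments to its own arc and all message patterns across the two cut-edges, accepting iff some assignment makes its whole arc accept; because the cut has only $O(1)$ edges, the messages crossing it are $o(n)$ bits even when messages are nominally unbounded (we only pay for what actually crosses the cut in the transcript that matters), and the certificates seen near the cut are $o(n)$ bits. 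This gives a communication protocol of cost $o(n)$, contradicting the $\Omega(n)$ lower bound and forcing $\cL\notin\AMH(o(n),\infty)$.

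\textbf{Main obstacle.} The delicate point I expect to be hardest is handling the ``messages of unbounded size'' and ``arbitrarily large constant number of rounds'' clauses simultaneously. Unbounded messages mean I cannot naively charge communication across the cut to the message size; instead I must argue that what crosses the $O(1)$ cut-edges is what the two simulated players exchange, and that the \emph{number of rounds} being a constant keeps the cross-cut transcript length at $O(1)\cdot(\text{rounds})\cdot(\text{cross-cut message length})$ --- but the cross-cut message length is itself unbounded. Resolving this requires choosing $\cL$ so that the hardness is concentrated at the cut in a way that is robust to unbounded cross-cut communication: the right move is to make the hard instances indistinguishable \emph{regardless} of what crosses the cut, by ensuring each arc's acceptance depends on a global quantity (a count, or an index into the string) that genuinely requires transmitting $\Omega(n)$ bits of \emph{certificate} information, not merely verification messages. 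Thus the lower bound must be driven by the certificate size $\sigma=o(n)$ together with the randomness $\rho=o(n)$, via a counting argument over prover strategies (the $2^{o(n)}$ bound above), rather than by communication across the cut; getting that counting bound to survive the constant-round, unbounded-message regime and to beat $|H|=2^{\Omega(n)}$ is the crux.
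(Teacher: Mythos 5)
There is a genuine gap, concentrated in the lower-bound half. Your ``cleanest'' route --- splitting the cycle at a cut of $O(1)$ edges and reducing to a two-party Arthur--Merlin communication problem --- cannot be completed: you would need an explicit function on $n$-bit inputs whose Arthur--Merlin communication complexity is $\Omega(n)$, and no such function is known (proving even super-logarithmic $\mathsf{AM}$ communication lower bounds for explicit functions is a long-standing open problem; $\equ$ in particular carries only the $\Omega(\log\log n)$ bound of~\cite{GPW16}, which is exactly why Theorem~\ref{thm:lowerboundsymmetry} yields $\log\log n$ and not more, and disjointness even admits cheap $\mathsf{AM}$-style protocols~\cite{AW09}). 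So a reduction across the cut can at best reproduce a $\log\log n$-type bound, far from the $\Omega(n)$ the theorem requires. Your fallback --- a counting argument against an explicit hard set $H$ with $|H|=2^{\Omega(n)}$ plus a fooling-set/stitching step --- is also not in completable form: for a \emph{fixed} explicit language you would still have to exhibit $w\in H$ and $w'\notin H$ such that Merlin's accepting transcripts for $w$ transplant to $w'$ undetected, and nothing in your outline forces such a pair to exist for the particular $H$ you construct while simultaneously keeping $H$ locally certifiable.

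The paper's proof avoids both problems by being non-constructive. It counts \emph{verifiers}: on the oriented cycle with consecutive identifiers, a $t$-round verifier at a node is a function of the $(2t+1)k\sigma(n)+k\rho(n)$ certificate and random bits in its $t$-ball, so there are at most $2^{n2^{(2t+1)k\sigma(n)+k\rho(n)}}$ protocols, versus $2^{2^n}$ languages on $n$-bit cyclic inputs; since $\sigma,\rho\in o(n)$, pigeonhole gives, for each fixed $(k,t)$, a length $n_0$ and a language on $C_{n_0}$ that no such protocol decides. A second diagonalization (the $n_{\min}/m_{\max}$ bookkeeping) glues these finite counterexamples into one language $\cL$ defeating every constant $(k,t)$ simultaneously --- a step your proposal does not address, even though the theorem quantifies over the entire hierarchy $\AMH$. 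Finally, because $\cL$ lives on cycles with $1$-bit inputs, the $\Sigma_1\LD(O(n),O(n))$ upper bound is trivial (Merlin hands every node the full $n$-bit input string; nodes check pairwise consistency, consistency with their own input, and membership), so the non-constructiveness costs nothing; your plan, by contrast, still owes a resolution of the tension between $H$ being locally certifiable and $H$ being information-theoretically hard.
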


\begin{proof}
We  show that there exists a language on 0/1-labelled oriented cycles that is outside $\AMH(o(n),\infty)$, where an oriented cycles is a cycle $C_n=(u_0,\dots,u_{n-1})$ in which the nodes are provided with IDs in $\{0,\dots,n-1\}$, given consecutively to the nodes, i.e., $\id(u_i)=i$. (Node $u_i$ is adjacent to nodes $u_{i+1 \bmod n}$ and $u_{i-1 \bmod n}$). In addition, 0/1-labelled oriented cycles means that we restrict ourselves to languages with binary inputs. The proof is based on a construction and a counting argument similar to the ones used in~\cite{FFH16} for proving that there are languages outside the local hierarchy $\LH$. 

Let $k$ and $t$ be two non-negative integers, and let $\sigma:\mathbb{N}\to\mathbb{N}$ with $\sigma\in o(n)$. First, we show that there exists an integer $n$, and a language on 0/1-labelled oriented cycles that cannot be recognized by a protocol in $\AM[k](\sigma,\infty)$ with Arthur generating $r=o(n)$ random bits at each interaction, and running in $t$-round verifier. Observe that the verifier at every node $u_i$ takes as input the $(2t+1)k\cdot \sigma(n)$ bits provided by Merlin to the nodes in the $t$-ball around $u_i$ during the $k$ interactions between Arthur and Merlin at these nodes, the $k\cdot \rho(n)$ random bits generated by Arthur, and the identifier~$i$ of $u_i$. The number of functions 
\[
\mu: \{0,1\}^{(2t+1)k\cdot \sigma(n)+k\cdot \rho(n)} \to \{\acc,\rej\}
\]
is at most 
\[
2^{2^{(2t+1)k\cdot \sigma(n)+k\cdot \rho(n)}}.
\]
It follows that the number of interactive protocols at $u_i$ with Arthur generating $\rho(n)$ random bits at each interaction, and running a $t$-round verifier, with Merlin using certificates of size at most $\sigma(n)$ bits is at most $2^{2^{(2t+1)k\cdot \sigma(n)+k\cdot \rho(n)}}$. Since the nodes have IDs, the total number of protocols is at most $2^{n2^{(2t+1)k\cdot \sigma(n)+k\cdot \rho(n)}}$.

On the other hand, the number of languages on binary strings of length $n$ is exactly $2^{2^n}$. So, let $n_0$ be such that 
\[
2^{n_0 2^{(2t+1)k\cdot \sigma(n_0)+k\cdot r(n_0)}}<2^{2^{n_0}}.
\]
Such an $n_0$ exists since $\sigma(n)=o(n)$ and $\rho(n)=o(n)$, and $k$ and $t$ are constant.  By the pigeon-hole principle, there exists a language on the 0/1-labeled oriented cycle $C_{n_0}$ that cannot be decided by any protocol in $\AM[k](\sigma(n_0),\infty)$ with Arthur generating $r(n_0)$ random bits at each interaction, and running a $t$-round verifier. Indeed, on the $n_0$-node oriented cycle, the verifier acts exactly the same for at least two languages $\cL$ and $\cL'$, and thus is incorrect for at least one of these two languages. 

Let $m$ be a nonnegative integer, and let $S(n,m)$ be the set of languages on 0/1-labeled oriented cycles with $n$ nodes that cannot be recognized by a protocol with $k=t=m$ (i.e., $m$ interactions between Arthur and Merlin, and $m$ rounds of communication between the nodes). It follows from the first part of the proof that, for every $m$, there exists $n$ such that $S(n,m)\neq\emptyset$. 
So, let 
\[
 n_{min}(m)=\min\{n\geq 1 : S(n,m)\neq\emptyset\}.
\]
Also, let
\[
m_{max}(n)=\max\{m\geq 0 : S(n,m)\neq\emptyset\}.
\] 
Note that $m_{max}$ is well defined since, for every $n\geq 3$, there are languages on 0/1-labeled $C_n$ that cannot be decided in zero rounds without interactions with a prover (corresponding to $m=0$), and every language on $C_n$ is decidable in $\lceil n/2\rceil$ rounds (corresponding to $m=\lceil n/2\rceil$). Note also that, for every $n\geq 3$ and $m\geq 0$, we have
\[
m_{max}(n_{min}(m)) \geq m.
\]
Given $n$ and $m$ such that $S(n,m)\neq \emptyset$, let $\cL(n,m)$ be the smallest language of $S(n,m)$ according to the lexicographic ordering of the collections of binary words of $n$ bits corresponding to the inputs in the language. Let 
\[
\cL= \bigcup_{n\geq 3} L(n,m_{max}(n)).
\]
Note that $L$ is indeed a distributed language, for it is Turing-computable. We show that 
\[
\cL\notin \AMH(\sigma,\infty).
\]
Suppose, for the sake of contradiction, that $\cL\in \AMH(\sigma,\infty)$. Then there exists an interactive protocol~$\mathcal{P}$ in $\AM[k](\sigma,\infty)$ for $\cL$, with Arthur tossing $\rho(n)$ bits at each interaction, and performing a $t$-round verification algorithm, for some $k\geq 0$ and $t\geq 0$. Let  $m=\max\{k,t\}$.  We have $\cL\in\AM[m](\sigma,\infty)$. Let us then consider the restriction $\cL'$ of $\cL$ on the oriented cycle with $n_{min}(m)$ nodes, that is, 
\[
\cL'=\cL(n_{min}(m), m_{max}(n_{min}(m))).
\] 
Since 
$
\cL'\in S(n_{min}(m), m_{max}(n_{min}(m))),
$
it cannot be recognized by an interactive $\AM[m'](\sigma,\infty)$ protocol with a $m'$-round verification algorithm, for $m'=m_{max}(n_{min}(m))$. On the other hand, $m'\geq m$, and therefore $L'$ cannot be recognized by an interactive protocol with parameter $m$ either. In particular, $\cL'$ cannot be recognized by $\mathcal{P}$, a contradiction. Therefore $\cL\notin \AM[m](\sigma,\infty)$, from which it follows that $\cL\notin \AMH(\sigma,\infty)$, and therefore 
\[
\cL\notin \AMH(o(n),\infty).
\]
We complete the proof by noticing that every language $\cL$ on 0/1-labelled oriented cycles has a 1-round locally checkable proof with $O(n)$-bit certificate, where neighboring nodes exchange $O(n)$-bit messages. That is, we observe that 
\[
\cL\in\Sigma_1\LD(O(n),O(n)).
\]
The certificate given to every node consists of an $n$-bit string $s=(s_1,\dots,s_n)$ where $s_i\in\{0,1\}$ is supposed to be equal to the input of $u_i$, for every $i=1,\dots,n$. Every node checks that this is indeed the case, and that it is given the same bit-string as the ones given to its neighbors. Finally, it checks whether $s\in\cL$. If all tests are passed, it accepts, otherwise it rejects.  
\end{proof}

We complete this section by showing that, in contrast to the previous theorem, every distributed language on regular graphs with $O(1)$-bit inputs has a locally checkable proof with $O(n)$-bit certificates, and a 2-round verifier.
 
\begin{theorem}\label{thm:insideSigma1}
Every distributed language on $d$-regular graphs with $O(1)$-bit input labels  belongs to 
$
\Sigma_1\LD(\widetilde{O}(n),O(dn)),
$
with a verifier performing two rounds. 
\end{theorem}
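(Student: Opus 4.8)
The plan is to mimic the oriented-cycle construction at the end of the proof of Theorem~\ref{thm:outsideAM}, where every node was simply handed the whole $n$-bit input string and checked membership locally, and to push it to general $d$-regular graphs by splitting the instance between certificates and messages: the cheap input part is placed in the certificates, while the expensive adjacency part is transmitted. First I would use Lemma~\ref{lem:distance-2coloring} together with a certified rooted spanning tree (\cite{KKP10}) to equip every node with the certified value $n$ and a certified bijective numbering $V\to\{1,\dots,n\}$, at a cost of $O(\log n)$ bits; this lets me speak of a global adjacency matrix and a global input vector indexed by $\{1,\dots,n\}$.

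Next, the certificate of a node $v$ would consist of (i) the entire input vector $x\in\{0,1\}^{O(n)}$, which fits in $\widetilde{O}(n)$ bits precisely because the inputs are $O(1)$-bit, and (ii) the single row $R_v\in\{0,1\}^n$ of the adjacency matrix associated with $v$. This is $\widetilde{O}(n)$ bits per node. In the verification, $v$ checks that $R_v$ is exactly the indicator of the numbers of its true neighbors (one round of identifier exchange suffices) and that its copy of $x$ agrees with its neighbors' copies and with its own input bit. By connectivity these checks force every node to hold the true input vector and force the collection $(R_v)_{v\in V}$ to be the true adjacency matrix, so a dishonest prover has no freedom in the distributed description of $(G,\id,x)$.

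The final, and hardest, step is to let the nodes verify $(G,\id,x)\in\cL$ from this distributed description without any single node storing the whole $\widetilde{O}(dn)$-bit graph in its certificate. The intended mechanism is to have every node broadcast its row $R_v$ in the first round and forward the $O(d)$ rows it thereby receives in the second round, which is exactly what the $O(dn)$-bit message bound pays for, while a consistent global view is reconciled along the certified spanning tree. Each node then runs the (Turing) decision procedure for $\cL$ on the instance it holds and accepts iff it is a member. Completeness is immediate since the prover can always supply the honest rows and input vector, and soundness follows because the row/input consistency checks pin the reconstructed instance to the true one, so a node cannot be fooled into evaluating $\cL$ on a different graph.

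I expect the crux to be exactly this reconstruction/aggregation phase. Two rounds propagate information only within distance two, so for graphs of large diameter the full adjacency cannot be gathered by communication alone and must instead be delivered through the certified spanning-tree structure, with the prover pre-positioning partial views that neighbors cross-check. The delicate point is to carry out this certified global reconciliation while keeping the per-node certificates at $\widetilde{O}(n)$ and the messages at $O(dn)$ simultaneously, rather than letting the certificate blow up to the $\widetilde{O}(dn)$ size of the entire graph; making the accounting fit both budgets at once, against the distance-two barrier of the two-round verifier, is where the real work lies.
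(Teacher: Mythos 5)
There is a genuine gap, and you have in fact located it yourself: your construction never gets the whole graph to any single node. With only the row $R_v$ in $v$'s certificate, two rounds of forwarding give $v$ the adjacency rows of nodes within distance two, and nothing more; on a $d$-regular graph of diameter larger than two no node can evaluate a global property of $G$, and routing ``along the certified spanning tree'' does not help, since information still travels only two hops in two rounds regardless of which edges it uses. The fallback you gesture at --- the prover ``pre-positioning partial views'' --- is exactly the step that needs an actual mechanism, and as stated it collides with the budget: if each node must end up holding all of $G$ after one round of exchange with its $d$ neighbors, the union of the neighbors' certificates must cover all $nd/2$ edges, and you give no reason why this can be arranged with only $\widetilde{O}(n)$ bits per certificate.

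The paper's resolution is a covering argument that you are missing. Partition $E$ into $d$ parts $E_1,\dots,E_d$ of at most $\lceil n/2\rceil$ edges each, and have the prover assign to every node a selection of $O(\log n)$ part-indices such that, for every node $u$, the parts held by $u$'s neighbors jointly cover $\{1,\dots,d\}$. The existence of such an assignment is shown probabilistically: if each node picks a uniformly random index, a fixed node's $d$ neighbors cover all parts with probability $d!/d^d$, and $c\log n$ independent repetitions drive the per-node failure probability below $1/n^2$, so a union bound gives a valid deterministic assignment. Each certificate is then $O(\log n)$ parts of $O(n\log n)$ bits, i.e.\ $\widetilde{O}(n)$ bits; a single round lets every node reassemble all of $E$ (and, by the same trick, all input labels) from its neighbors within the $O(dn)$ message budget; and the second round is spent exactly as in your soundness sketch, cross-checking that neighbors reconstructed the same configuration and that it matches each node's true local view. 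Your treatment of the input vector (hand everyone the full $O(n)$-bit string) and your soundness reasoning are fine, but without the partition-and-cover step the theorem is not proved.
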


\begin{proof}
Let $\cL$ be a distributed language on regular graphs, and let $(G,\id,x)\in \cL$. Let $d$ be the degree of $G=(V,E)$. The prover acts as follows. The set $E$ of $nd/2$ edges of $G$ is partitioned into $d$ sets $E_1,\dots,E_d$, each of cardinality at most $\lceil \frac{n}{2}\rceil$. Let every node $u$ pick a set $E_i$, $i\in\{1,\dots,d\}$ uniformly at random, and let $\mathcal{E}_u$ be the event ``the union of the sets picked by $u$'s neighbors is $E$''. We have 
\[
\Pr[\mathcal{E}_u]=d! \Big ( \frac{1}{d} \Big )^d = \frac{\sqrt{2\pi d}}{e^d} \, \Big (1+O\Big(\frac{1}{d}\big)\Big). 
\]
By repeating $c\log n$ times the experiment, we get 
\[
\Pr[\overline{\mathcal{E}_u}] \leq \Big (1-\frac{\sqrt{2\pi d}}{e^d} \, \Big (1+O\Big(\frac{1}{d}\Big) \Big) \Big )^{c\log n}.
\]
It follows that $\Pr[\overline{\mathcal{E}_u}] \leq \frac{1}{n^2}$ for $c$ large enough. Hence, $\bigwedge_{u\in V}\mathcal{E}_u$ holds with high probability. This shows that the prover has a way to assign the sets $E_i$, $i=1,\dots,d$ to the nodes such that the verifier running at every node $u$ can gather the sets of edges stored at its neighbors, and reconstruct the graph $G$ from these sets. 

By the same reasoning, it can be shown that the set of $n$ input labels can be split in $d$ sets $I_1,\dots,I_d$, and distributed evenly to the nodes by the prover, with $O(\log n)$ sets per node, so that every node can recover the entire set of input labels from its neighbors. 

Overall, with $\widetilde{O}(n)$-bit certificate, all the nodes can recover what is supposed to be the input configuration $(G,\id,x)$. Every node then performs a second round, enabling all the nodes to check that they collectively agree on $(G,\id,x)$. (If there is disagreement between two neighboring nodes about the given input configuration, these two nodes reject). A node that agree on  $(G,\id,x)$ with all its neighbors checks that this configuration agrees with its view at distance~1 in the actual network, i.e., that its incident edges and input are as specified by $(G,\id,x)$. Finally, a node that passed all these test completes the verification by checking whether $(G,\id,x)\in \cL$, and accept or reject accordingly.  

The protocol is correct as, if $(G,\id,x)\notin \cL$, then the verifier must provide all nodes with the same input configuration $(G',\id',x')\in \cL$ for expecting all nodes to accept. However, since $(G',\id',x')\neq  (G,\id,x)$, there exists at least one node $u\in V$ satisfying either $E'(u)\neq E(u)$ or $x'(u)\neq x(u)$, leading this node to reject ($E'(u)$ and $E(u)$ denotes the edges incident to $u$ in $G'$ and $G$, respectively). 
\end{proof}

Since $\Sigma_1\LD \subseteq  \AM \cap \MA$, an immediate corollary of this theorem is that every distributed language on $d$-regular graphs with $O(1)$-bit inputs belongs to $\AM(\widetilde{O}(n),O(dn))$. 
Using a known $\RPLS$ protocol~\cite{BFPS15}, we can also show that each such language belongs to $\MA(\widetilde{O}(dn),O(\log n))$.

\bibliography{dmabib}

\begin{thebibliography}{10}

\bibitem{AW09}
Scott Aaronson and Avi Wigderson.
\newblock Algebrization: {A} new barrier in complexity theory.
\newblock {\em {TOCT}}, 1(1):2:1--2:54, 2009.
\newblock URL: \url{https://doi.org/10.1145/1490270.1490272}, \href
  {http://dx.doi.org/10.1145/1490270.1490272}
  {\path{doi:10.1145/1490270.1490272}}.

\bibitem{ARR17}
Amir Abboud, Aviad Rubinstein, and R.~Ryan Williams.
\newblock Distributed {PCP} theorems for hardness of approximation in {P}.
\newblock In {\em 58th {IEEE} Annual Symposium on Foundations of Computer
  Science, {FOCS}}, pages 25--36, 2017.
\newblock URL: \url{https://doi.org/10.1109/FOCS.2017.12}, \href
  {http://dx.doi.org/10.1109/FOCS.2017.12} {\path{doi:10.1109/FOCS.2017.12}}.

\bibitem{AKM13}
Daniel Apon, Jonathan Katz, and Alex~J. Malozemoff.
\newblock One-round multi-party communication complexity of distinguishing
  sums.
\newblock {\em Theor. Comput. Sci.}, 501:101--108, 2013.
\newblock URL: \url{https://doi.org/10.1016/j.tcs.2013.07.026}, \href
  {http://dx.doi.org/10.1016/j.tcs.2013.07.026}
  {\path{doi:10.1016/j.tcs.2013.07.026}}.

\bibitem{AHPSV16}
Maria Axenovich, Jochen Harant, Jakub Przybylo, Roman Sot{\'{a}}k, Margit
  Voigt, and Jenny Weidelich.
\newblock A note on adjacent vertex distinguishing colorings of graphs.
\newblock {\em Discrete Applied Mathematics}, 205:1--7, 2016.
\newblock URL: \url{https://doi.org/10.1016/j.dam.2015.12.005}, \href
  {http://dx.doi.org/10.1016/j.dam.2015.12.005}
  {\path{doi:10.1016/j.dam.2015.12.005}}.

\bibitem{BDFO18}
Alkida Balliu, Gianlorenzo D'Angelo, Pierre Fraigniaud, and Dennis Olivetti.
\newblock What can be verified locally?
\newblock {\em J. Comput. Syst. Sci.}, 97:106--120, 2018.
\newblock URL: \url{https://doi.org/10.1016/j.jcss.2018.05.004}, \href
  {http://dx.doi.org/10.1016/j.jcss.2018.05.004}
  {\path{doi:10.1016/j.jcss.2018.05.004}}.

\bibitem{Censor-HillelPP17}
Keren Censor{-}Hillel, Ami Paz, and Mor Perry.
\newblock Approximate proof-labeling schemes.
\newblock In {\em Structural Information and Communication Complexity - 24th
  International Colloquium, {SIROCCO}}, pages 71--89, 2017.
\newblock URL: \url{https://doi.org/10.1007/978-3-319-72050-0\_5}, \href
  {http://dx.doi.org/10.1007/978-3-319-72050-0\_5}
  {\path{doi:10.1007/978-3-319-72050-0\_5}}.

\bibitem{CPZ19}
Yi{-}Jun Chang, Seth Pettie, and Hengjie Zhang.
\newblock Distributed triangle detection via expander decomposition.
\newblock In {\em Proceedings of the Thirtieth Annual {ACM-SIAM} Symposium on
  Discrete Algorithms, {SODA}}, pages 821--840, 2019.
\newblock URL: \url{https://doi.org/10.1137/1.9781611975482.51}, \href
  {http://dx.doi.org/10.1137/1.9781611975482.51}
  {\path{doi:10.1137/1.9781611975482.51}}.

\bibitem{CGZ09}
Sebastian Czerwinski, Jaroslaw Grytczuk, and Wiktor Zelazny.
\newblock Lucky labelings of graphs.
\newblock {\em Inf. Process. Lett.}, 109(18):1078--1081, 2009.
\newblock URL: \url{https://doi.org/10.1016/j.ipl.2009.05.011}, \href
  {http://dx.doi.org/10.1016/j.ipl.2009.05.011}
  {\path{doi:10.1016/j.ipl.2009.05.011}}.

\bibitem{DruckerKO14}
Andrew Drucker, Fabian Kuhn, and Rotem Oshman.
\newblock On the power of the congested clique model.
\newblock In {\em {ACM} Symposium on Principles of Distributed Computing,
  {PODC}}, pages 367--376, 2014.
\newblock URL: \url{https://doi.org/10.1145/2611462.2611493}, \href
  {http://dx.doi.org/10.1145/2611462.2611493}
  {\path{doi:10.1145/2611462.2611493}}.

\bibitem{FF16}
Laurent Feuilloley and Pierre Fraigniaud.
\newblock Survey of distributed decision.
\newblock {\em Bulletin of the {EATCS}}, 119, 2016.
\newblock URL: \url{http://eatcs.org/beatcs/index.php/beatcs/article/view/411}.

\bibitem{FFH16}
Laurent Feuilloley, Pierre Fraigniaud, and Juho Hirvonen.
\newblock A hierarchy of local decision.
\newblock In {\em 43rd International Colloquium on Automata, Languages, and
  Programming, {ICALP}}, pages 118:1--118:15, 2016.
\newblock URL: \url{https://doi.org/10.4230/LIPIcs.ICALP.2016.118}, \href
  {http://dx.doi.org/10.4230/LIPIcs.ICALP.2016.118}
  {\path{doi:10.4230/LIPIcs.ICALP.2016.118}}.

\bibitem{FFHPP18}
Laurent Feuilloley, Pierre Fraigniaud, Juho Hirvonen, Ami Paz, and Mor Perry.
\newblock Redundancy in distributed proofs.
\newblock In {\em 32nd International Symposium on Distributed Computing,
  {DISC}}, pages 24:1--24:18, 2018.
\newblock URL: \url{https://doi.org/10.4230/LIPIcs.DISC.2018.24}, \href
  {http://dx.doi.org/10.4230/LIPIcs.DISC.2018.24}
  {\path{doi:10.4230/LIPIcs.DISC.2018.24}}.

\bibitem{FGKPP14}
Pierre Fraigniaud, Mika G{\"{o}}{\"{o}}s, Amos Korman, Merav Parter, and David
  Peleg.
\newblock Randomized distributed decision.
\newblock {\em Distributed Computing}, 27(6):419--434, 2014.
\newblock URL: \url{https://doi.org/10.1007/s00446-014-0211-x}, \href
  {http://dx.doi.org/10.1007/s00446-014-0211-x}
  {\path{doi:10.1007/s00446-014-0211-x}}.

\bibitem{FKP13}
Pierre Fraigniaud, Amos Korman, and David Peleg.
\newblock Towards a complexity theory for local distributed computing.
\newblock {\em J. {ACM}}, 60(5):35:1--35:26, 2013.
\newblock URL: \url{https://doi.org/10.1145/2499228}, \href
  {http://dx.doi.org/10.1145/2499228} {\path{doi:10.1145/2499228}}.

\bibitem{KKP10}
Pierre Fraigniaud, Boaz Patt{-}Shamir, and Mor Perry.
\newblock Randomized proof-labeling schemes.
\newblock {\em Distributed Computing}, 32(3):217--234, 2019.
\newblock URL: \url{https://doi.org/10.1007/s00446-018-0340-8}, \href
  {http://dx.doi.org/10.1007/s00446-018-0340-8}
  {\path{doi:10.1007/s00446-018-0340-8}}.

\bibitem{BFPS15}
Pierre Fraigniaud, Boaz Patt{-}Shamir, and Mor Perry.
\newblock Randomized proof-labeling schemes.
\newblock {\em Distributed Computing}, 32(3):217--234, 2019.
\newblock URL: \url{https://doi.org/10.1007/s00446-018-0340-8}, \href
  {http://dx.doi.org/10.1007/s00446-018-0340-8}
  {\path{doi:10.1007/s00446-018-0340-8}}.

\bibitem{G77}
John Gill.
\newblock Computational complexity of probabilistic turing machines.
\newblock {\em {SIAM} J. Comput.}, 6(4):675--695, 1977.
\newblock URL: \url{https://doi.org/10.1137/0206049}, \href
  {http://dx.doi.org/10.1137/0206049} {\path{doi:10.1137/0206049}}.

\bibitem{GPW16}
Mika G{\"{o}}{\"{o}}s, Toniann Pitassi, and Thomas Watson.
\newblock Zero-information protocols and unambiguity in arthur-merlin
  communication.
\newblock {\em Algorithmica}, 76(3):684--719, 2016.
\newblock URL: \url{https://doi.org/10.1007/s00453-015-0104-9}, \href
  {http://dx.doi.org/10.1007/s00453-015-0104-9}
  {\path{doi:10.1007/s00453-015-0104-9}}.

\bibitem{GS16}
Mika G{\"{o}}{\"{o}}s and Jukka Suomela.
\newblock Locally checkable proofs in distributed computing.
\newblock {\em Theory of Computing}, 12(1):1--33, 2016.
\newblock URL: \url{https://doi.org/10.4086/toc.2016.v012a019}, \href
  {http://dx.doi.org/10.4086/toc.2016.v012a019}
  {\path{doi:10.4086/toc.2016.v012a019}}.

\bibitem{KLM18}
{Karthik {C. S.}}, Bundit Laekhanukit, and Pasin Manurangsi.
\newblock On the parameterized complexity of approximating dominating set.
\newblock In {\em Proceedings of the 50th Annual {ACM} {SIGACT} Symposium on
  Theory of Computing, {STOC}}, pages 1283--1296, 2018.
\newblock URL: \url{https://doi.org/10.1145/3188745.3188896}, \href
  {http://dx.doi.org/10.1145/3188745.3188896}
  {\path{doi:10.1145/3188745.3188896}}.

\bibitem{KOS18}
Gillat Kol, Rotem Oshman, and Raghuvansh~R. Saxena.
\newblock Interactive distributed proofs.
\newblock In {\em Proceedings of the 2018 {ACM} Symposium on Principles of
  Distributed Computing, {PODC}}, pages 255--264, 2018.
\newblock URL: \url{https://dl.acm.org/citation.cfm?id=3212771}.

\bibitem{KK07}
Amos Korman and Shay Kutten.
\newblock Distributed verification of minimum spanning trees.
\newblock {\em Distributed Computing}, 20(4):253--266, 2007.
\newblock URL: \url{https://doi.org/10.1007/s00446-007-0025-1}, \href
  {http://dx.doi.org/10.1007/s00446-007-0025-1}
  {\path{doi:10.1007/s00446-007-0025-1}}.

\bibitem{KN97}
Eyal Kushilevitz and Noam Nisan.
\newblock {\em Communication complexity}.
\newblock Cambridge University Press, 1997.

\bibitem{NPY18}
Moni Naor, Merav Parter, and Eylon Yogev.
\newblock The power of distributed verifiers in interactive proofs.
\newblock {\em CoRR}, abs/1812.10917, 2018.
\newblock URL: \url{http://arxiv.org/abs/1812.10917}, \href
  {http://arxiv.org/abs/1812.10917} {\path{arXiv:1812.10917}}.

\bibitem{NS95}
Moni Naor and Larry~J. Stockmeyer.
\newblock What can be computed locally?
\newblock {\em {SIAM} J. Comput.}, 24(6):1259--1277, 1995.
\newblock URL: \url{https://doi.org/10.1137/S0097539793254571}, \href
  {http://dx.doi.org/10.1137/S0097539793254571}
  {\path{doi:10.1137/S0097539793254571}}.

\bibitem{N93}
Noam Nisan.
\newblock The communication complexity of threshold gates.
\newblock In {\em Combinatorics, Paul Erdos is Eighty}, volume~1, pages
  301--315, 1993.

\bibitem{RY15}
Anup Rao and Amir Yehudayoff.
\newblock Simplified lower bounds on the multiparty communication complexity of
  disjointness.
\newblock In {\em 30th Conference on Computational Complexity, {CCC}}, pages
  88--101, 2015.
\newblock URL: \url{https://doi.org/10.4230/LIPIcs.CCC.2015.88}, \href
  {http://dx.doi.org/10.4230/LIPIcs.CCC.2015.88}
  {\path{doi:10.4230/LIPIcs.CCC.2015.88}}.

\bibitem{S77}
Larry~J. Stockmeyer.
\newblock The polynomial-time hierarchy.
\newblock {\em Theor. Comput. Sci.}, 3(1):1--22, 1976.
\newblock URL: \url{https://doi.org/10.1016/0304-3975(76)90061-X}, \href
  {http://dx.doi.org/10.1016/0304-3975(76)90061-X}
  {\path{doi:10.1016/0304-3975(76)90061-X}}.

\end{thebibliography}

\end{document}